\newtheorem{theorem}{Theorem}[section]
\newtheorem{definition}[theorem]{Definition}
\newtheorem{assumption}[theorem]{Assumption}
\newtheorem{proposition}[theorem]{Proposition}
\newtheorem{lemma}[theorem]{Lemma}
\newtheorem{remark}[theorem]{Remark}
\newcommand{\eat}[1]{}
\newtcolorbox[list inside=tcb]{mytcb}[1][]{#1}
\DeclareMathOperator{\proj}{proj}
\DeclareMathOperator{\dist2}{dist}
\title{A Convergent Single-Loop Algorithm for Relaxation of Gromov-Wasserstein in Graph Data }
\author{Jiajin Li \\
Stanford University\\
 \texttt{jiajinli@stanford.edu}
\And
Jianheng Tang\\
HKUST (GZ)\\
\texttt{jtangbf@connect.ust.hk}
\And
Lemin Kong\\
CUHK\\
\texttt{lkong@se.cuhk.edu.hk}
\And
Huikang Liu \\
SUFE\\
\texttt{liuhuikang@sufe.edu.cn}
\And
Jia Li\\
HKUST (GZ)\\
\texttt{jialee@ust.hk}
\And 
Anthony Man-Cho So \\
CUHK\\
\texttt{manchoso@se.cuhk.edu.hk}
\And 
Jose Blanchet\\
Stanford University\\
\texttt{jose.blanchet@stanford.edu}
}
\begin{document}

\maketitle
\vspace{-3mm}
\begin{abstract}
% In this paper, we design and analyze a single-loop algorithm, called Bregman Alternating Projected Gradient (BAPG) method, for computing the Gromov-Wasserstein (GW) distance approximately. We introduce a novel relaxation technique to sacrifice some
% feasibility of the coupling map to gain both matching accuracy and computational efficiency. A crucial insight is that those tasks (e.g., graph alignment and partition), which do not care about the sharpness of the coupling so much, can benefit a lot from our relaxation. 
% Our analysis is based on a surprising observation that 
% the GW problem satisfies a so-called Luo-Tseng error bound condition, which relates to estimating the distance of a point to the critical point set of the GW problem based on the optimality residual. 
% Armed with this error bound condition, we are able to give an approximation bound for the distance between the fixed-point set of BAPG and the critical point set of GW explicitly. 
% Moreover, under a mild  technical assumption, we can show that BAPG converges to its fixed point set.
% We further conduct comprehensive numerical experiments to validate the effectiveness of BAPG on graph alignment and partition tasks. In terms of both wall-clock time and quality of solutions, 
% the proposed method achieves state-of-the-art results.
In this work, we present the Bregman Alternating Projected Gradient (BAPG) method, a single-loop algorithm that offers an approximate solution to the Gromov-Wasserstein (GW) distance. 
We introduce a novel relaxation technique that balances accuracy and computational efficiency, albeit with some compromises in the feasibility of the coupling map.  Our analysis is based on the observation that the GW problem satisfies the Luo-Tseng error bound condition, which relates to estimating the distance of a point to the critical point set of the GW problem based on the optimality residual.
This observation allows us to provide an approximation bound for the distance between the fixed-point set of BAPG and the critical point set of GW. Moreover, under a mild  technical assumption, we can  show that BAPG converges to its fixed point set.
The effectiveness of BAPG has been validated through comprehensive numerical experiments in graph alignment and partition tasks, where it outperforms existing methods in terms of both solution quality and wall-clock time.
\end{abstract}

\section{Introduction}

The GW distance provides a flexible way to compare and couple probability distributions supported on different metric spaces. This has led to a surge in literature that applies the GW distance to various structural data analysis tasks, including 2D/3D shape  matching~\citep{peyre2016gromov,memoli2004comparing,memoli2009spectral}, molecule analysis~\citep{vayer2018fused,titouan2019optimal}, graph alignment and partition~\citep{chowdhury2019gromov,xu2019gromov,xu2019scalable,chowdhury2021generalized,gao2021unsupervised}, graph embedding and classification~\citep{vincent2021online, xu2022representing}, generative modeling~\citep{bunne2019learning,xu2021learning}.

Although the GW distance has gained a lot of attention in the machine learning and data science communities, most existing algorithms for computing the GW distance are double-loop algorithms that require another iterative algorithm as a subroutine, making them not ideal for practical use.
Recently, an entropy-regularized iterative sinkhorn projection algorithm called eBPG was proposed by \citet{solomon2016entropic}, which has been proven to converge under the Kurdyka-\L{}ojasiewicz framework. However, eBPG has several limitations. Firstly, it addresses an entropic-regularized GW objective, whose regularization parameter has a major impact on the model's performance. Secondly, it requires solving an entropic optimal transport problem at each iteration, which is both computationally expensive and not practical.
In an effort to solve the GW problem directly, \citet{xu2019gromov} proposed the Bregman projected gradient (BPG), which is still a double-loop algorithm that relies on another iterative algorithm as a subroutine. Additionally, it suffers from numerical instability due to the lack of an entropic regularizer.
While \citet{titouan2019optimal,memoli2011gromov} introduced the Frank-Wolfe method to solve the GW problem, they still relied on linear programming solvers and line-search schemes, making it unsuitable for even medium-sized tasks.
Recently, \citet{xu2019gromov} developed a simple heuristic, single-loop method called BPG-S based on BPG that showed good empirical performance on node correspondence tasks. However, its performance in the presence of noise is unknown due to the lack of theoretical support.

% The main difficulty is how to deal with the Birkhoff polytope constraint (i.e., polytope of doubly stochastic matrices) for the coupling matrix. Its Bregman projection has no closed update so all existing algorithms have to rely on a certain iterative algorithm to solve it, which is either computationally expensive or practically sensitive to hyperparameters. 
% To overcome this issue, we develop a single-loop algorithm (BAPG) to solve the GW distance approximately. 
% The crux of our method is that we introduce a novel relaxation to sacrifice some feasibility of the coupling map (i.e., the coupling map does not fully satisfy the Birkhoff polytope constraint every iteration.) to gain computational efficiency. The key message is that this violation is not a big deal for some learning tasks, e.g., graph alignment and partition, which do not care about the quality of the coupling so much. 
% We find out that BAPG can obtain desirable performance on some graph learning tasks as the performance measure for those tasks is the matching accuracy instead of the sharpness of the probabilistic correspondence. To sum up, BAPG offers a way to sacrifice the feasibility to gain both computational efficiency and matching accuracy. 
The main challenge lies in efficiently tackling the Birkhoff polytope constraints (i.e., the polytope of doubly stochastic matrices) for the coupling matrix. The key issue is that there is no closed update for its Bregman projection, which forces current algorithms to rely on computationally expensive or hyperparameter-sensitive iterative methods.
To address this difficulty, we propose a single-loop algorithm (BAPG) that solves the GW distance approximately. Our solution incorporates a novel relaxation technique that sacrifices some feasibility of the coupling map to achieve computational efficiency. This violation is acceptable for certain learning tasks, such as graph alignment and partition, where the quality of the coupling is not the primary concern.
 We find that BAPG can obtain desirable performance on some graph learning tasks as the performance measure for those tasks is the matching accuracy instead of the sharpness of the probabilistic correspondence.  In conclusion,  BAPG offers a way to sacrifice the feasibility for both computational efficiency and matching accuracy. 

% Specifically, we decouple the Birkhoff polytope as simplex constraints for rows and columns separately and then execute the projected gradient descent on a constructed penalty function in an alternating fashion. 
% By leveraging the closed-form Bregman projection of the simplex constraint when we use the relative entropy as our base function, BAPG only involves the matrix-vector/matrix-matrix multiplications and element-wise matrix operations at each iteration. In short, the proposed single-loop algorithm (BAPG) enjoys a variety of convenient properties: GPU-friendly implementation,  robustness concerning the step size (i.e., the only hyper-parameter), and low memory cost.    
In our approach, we decouple the Birkhoff polytope constraint  into separate simplex constraints for the rows and columns. The projected gradient descent is then performed on a constructed penalty function using an alternating fashion.
By utilizing the closed-form Bregman projection of the simplex constraint with relative entropy as the base function, BAPG only requires matrix-vector/matrix-matrix multiplications and element-wise matrix operations at each iteration, making it a computationally efficient algorithm.
Thus, BAPG has several convenient properties such as compatibility with GPU implementation, robustness with regards to the step size (the only hyperparameter), and low memory requirements.

% Next, we investigate the approximation bound and convergence behavior of the proposed BAPG. A key fact here is that the GW problem is a symmetric nonconvex quadratic program with polytope constraints. By fully exploring this problem-specific structure, it is surprisingly interesting to note that the GW problem satisfies the  Luo-Tseng error bound condition~\citep{luo1992error}, which acts as a main technical tool to bound the distance between the fixed-point set of BAPG and the critical point set of the GW problem. 
% This is a notable departure from the usual approach to utilizing the Luo-Tseng error bound condition in establishing the local linear convergence rate for structured convex problems~\citep{zhou2017unified}. As a result, we are able to quantify the approximation bound for the fixed-point set of BAPG explicitly. Moreover, we establish the subsequent convergence result when the accumulative asymmetric error of the Bregman distance is bounded.
Next, we investigate the approximation bound and convergence behavior of BAPG. We surprisingly discover that the GW problem satisfies the Luo-Tseng error bound condition~\citep{luo1992error}. This fact allows us to bound the distance between the fixed-point set of BAPG and the critical point set of the GW problem, which is a notable departure from the usual approach of utilizing the Luo-Tseng error bound condition in establishing the  linear convergence rate for structured convex problems~\citep{zhou2017unified}. With this finding, we are able to quantify the approximation bound for the fixed-point set of BAPG explicitly. Moreover, we establish the subsequence convergence result when the accumulative asymmetric error of the Bregman distance is bounded.

Lastly, we present extensive experimental results to validate the effectiveness of BAPG for graph alignment and graph partition. Our results demonstrate that BAPG outperforms other heuristic single-loop and theoretically sound double-loop methods in terms of both computational efficiency and matching accuracy. We also conduct a sensitivity analysis of BAPG and demonstrate the benefits of its GPU acceleration through experiments on both synthetic and real-world datasets. All theoretical insights and results have been well-corroborated in the experiments. 

\section{Proposed Algorithm}
\label{sec:method}

In this section, we begin by presenting the GW distance as a nonconvex quadratic problem with Birkhoff polytope constraints. We then delve into the theoretical insights and computational characteristics of our proposed algorithm, BAPG.

The Gromov-Wasserstein distance was first introduced in \citep{memoli2011gromov,memoli2014gromov,peyre2019computational} as a way to quantify the distance between two probability measures supported on different metric spaces.  More precisely:
\begin{definition}[GW distance]
    \label{defi:gw}
    Suppose that we are given two unregistered compact metric spaces $(\mathcal{X},d_X)$, $(\mathcal{Y},d_Y)$ accompanied with Borel probability measures $\mu,\nu$ respectively. The GW distance between $\mu$ and $\nu$ is defined as 
    \[
    \inf_{\pi \in \Pi(\mu,\nu)}  \iint |d_X(x,x')-d_Y(y,y')|^2 d\pi(x,y)d\pi(x',y'),
    \]
    where $\Pi(\mu,\nu)$ is the set of all probability measures on $\mathcal{X}\times\mathcal{Y}$ with $\mu$ and $\nu$ as marginals. 
 \end{definition}

Intuitively, the GW distance aims to preserve the isometric structure between two probability measures through optimal transport. If there is a map that pairs $x\rightarrow y$ and $x'\rightarrow y'$, then the distance between $x$ and $x'$ should be similar to the distance between $y$ and $y'$. Due to these desirable properties, the GW distance is a powerful tool in structural data analysis, particularly in graph learning. Some examples of its applications include \citep{vayer2019sliced,xu2019gromov,xu2019scalable,solomon2016entropic,peyre2016gromov} and related references.

To start with our algorithmic developments, we consider the discrete case for simplicity and practicality,  where $\mu$ and $\nu$ are two empirical distributions, i.e., $\mu = \sum_{i=1}^n \mu_i \delta_{x_i}$ and  $\nu = \sum_{j=1}^m \nu_j \delta_{y_j}$. As a result, the GW distance can be reformulated as follows:
	\begin{equation}
	\label{eq:gw_qua}
	\begin{aligned}
	 & \min_{\pi \in \mathbb{R}^{n \times m}} -\text{Tr}(D_X\pi D_Y\pi^T)\\
	& \quad \,\text{s.t.} \quad\pi \mathbf{1}_m = \mu, \, \pi^T\mathbf{1}_n = \nu, \,  \pi \ge 0, 
	\end{aligned}
	\end{equation}
where $D_X$ and $D_Y$ are two symmetric distance matrices. 

\subsection{Relaxation of GW Distance}
Now, we will introduce our relaxation of GW distance.  The nonconvex quadratic program \eqref{eq:gw_qua} with polytope constraints is typically addressed by (Bregman) projected gradient descent type algorithms. However, existing algorithms require an inner iterative algorithm, such as Sinkhorn \citep{cuturi2013sinkhorn} or the semi-smooth Newton method \citep{cuturi2013sinkhorn}, to solve the regularized optimal transport problem at each iteration. This can lead to a computationally intensive double-loop scheme, which is not ideal for GPU-friendly computation. To overcome this issue, we aim to handle the row and column constraints separately using an operator splitting-based relaxation technique.
% It is easy to observe that Problem \eqref{eq:gw_qua} is a nonconvex quadratic program with polytope constraints. Naturally, if we want to invoke  (Bregman) projected gradient descent type algorithms to address \eqref{eq:gw_qua}, the key difficulty lies in the Birkhoff constraint with respect to the coupling matrix $\pi$. In fact, 
% all existing theoretically sound algorithms rely on an inner iterative algorithm to tackle the regularized optimal transport problem at each iteration, such as Sinkhorn \citep{cuturi2013sinkhorn} or semi-smooth Newton method~\citep{cuturi2013sinkhorn}. Arguably, one of the main drawbacks of the double-loop scheme is its computational burden, which is not GPU-friendly.
% To circumvent this drawback, we are motivated to handle the row and columns constraints separately by the operator splitting-based relaxation technique we will introduce later.

For simplicity, we consider the compact form for better exploiting the problem specific structures:
\begin{equation}
\label{eq:gw_compact}
    \min_{\pi} f(\pi)+g_1(\pi)+g_2(\pi). 
\end{equation}
Here, $f(\pi) =-\text{Tr}(D_X\pi D_Y\pi^T)$ is a nonconvex quadratic function;  $g_1(\pi)=\mathbb{I}_{\{\pi \in C_1\}}$ and 
$g_2(\pi)= \mathbb{I}_{\{\pi \in C_2\}}$ are two indicator functions over closed convex polyhedral sets. Here, $C_1=\{\pi\ge 0: \pi \mathbf{1}_m = \mu\}$ and $C_2=\{\pi\ge 0: \pi^T \mathbf{1}_n = \nu\}$.
 To decouple the Birkhoff polytope constraint, 
we adopt the operator splitting strategy to reformulate \eqref{eq:gw_compact} as
\begin{equation}
\label{eq:gw_bilinear}
\begin{aligned}
	 & \min_{ \pi = w}\, f(\pi,w) +g_1(\pi)+g_2(w)\\
% 	& \,\, \text{s.t.} \quad  \pi = w,
	\end{aligned}
	\end{equation}
where  $f(\pi,w) = -\text{Tr}(D_X\pi D_Yw^T)$.  Then, we penalize the equality constraint and process the alternating minimization scheme on the constructed penalized function, i.e.,
\begin{align*}
& F_\rho(\pi,w) = f(\pi,w)+g_1(\pi)+g_2(w)+\rho D_h(\pi,w).
\end{align*}
Here, $D_h(\cdot,\cdot)$ is the so-called Bregman divergence, i.e.,
$
D_{h}(x, y):=h(x)-h(y)-\langle\nabla h(y), x-y\rangle,
$
where $h(\cdot)$ is the Legendre function, e.g., $\tfrac{1}{2}\|x\|^2$, relative entropy $x\log x$, etc.

\subsection{Bregman Alternating Projected Gradient (BAPG)}
Next, we present the proposed single-loop Bregman alternating projected gradient (BAPG) method. The crux of BPAG is to take the alternating projected gradient descent step between $C_1$ and $C_2$.
For the $k$-th iteration, the BAPG update takes the form
\begin{equation}
\label{eq:bapg_general}
\begin{aligned}
\pi^{k+1} & = \mathop{\arg\min}_{\pi \in C_1}\left\{ f(\pi,w^k) + \rho D_h(\pi, w^k)\right\},  \\
w^{k+1} & = \mathop{\arg\min}_{w\in C_2}\left\{f(\pi^{k+1},w) + \rho D_h(w,\pi^{k+1})\right\}.
\end{aligned}
\end{equation}
% When we choose $h$ as the relative entropy, we are able to further benefit from the fact that the Bregman projection with respect to the relative entropy for the simplex constraint (e.g., $C_1$ and $C_2$) can be extremely efficiently computed~\citep{krichene2015efficient}. The above considerations lead to the closed-form updates in each iteration of BAPG in \eqref{eq:bapg_general} (i.e., denoting this specialization as KL-BAPG):
The choice of relative entropy as $h$ also brings the advantage of efficient computation of Bregman projection for simplex constraints, such as $C_1$ and $C_2$, as discussed in~\citep{krichene2015efficient}. These observations result in closed-form updates in each iteration of BAPG in \eqref{eq:bapg_general}. We refer to this specific case as KL-BAPG.
\begin{mytcb}[title= KL-BAPG]
\vspace{-1mm}
\begin{equation}
\label{eq:bapg_update_c}
\begin{aligned}
& \pi \leftarrow \pi \odot \exp({D_X \pi D_Y}/{\rho}), \quad \pi \leftarrow  \text{diag}(\mu./\pi\mathbf{1}_m) \pi, \\
&\pi \leftarrow \pi \odot \exp({D_X \pi D_Y}/{\rho}), \quad 
\pi \leftarrow\pi  \text{diag}(\nu./{\pi}^T\mathbf{1}_n),
\end{aligned}
\end{equation}
\end{mytcb}

where $\rho$ is the step size and $\odot$
denotes element-wise (Hadamard) matrix multiplication. 
% KL-BAPG enjoys several nice properties that are extremely attractive for medium or large-scale graph learning tasks. First, KL-BAPG is a single-loop algorithm that only involves matrix-vector/matrix-matrix multiplications and element-wise matrix operations. All these operations are GPU-friendly. Second, different from the entropic regularization parameter in eBPG, KL-BAPG is more robust to the step size $\rho$ in terms of solution performance. Third, KL-BAPG only involves one memory operation of a large matrix with size $nm$. Notably, even for large-scale optimal transport problems, the main bottleneck is not floating-point computations, but rather time-consuming memory operations~\citep{mai2021fast}.
KL-BAPG has several advantageous properties that make it ideal for medium to large-scale graph learning tasks. Firstly, it is a single-loop algorithm that only requires matrix-vector/matrix-matrix multiplications and element-wise matrix operations, which are highly optimized on GPUs. Secondly, unlike the entropic regularization parameter in eBPG, KL-BAPG is less sensitive to the choice of the step size $\rho$. Thirdly, KL-BAPG only requires one memory operation for a matrix of size $nm$, which is the main bottleneck in large-scale optimal transport problems rather than floating-point computations.~\citep{mai2021fast}.

% Similar to the quadratic penalty method \citep{nocedal2006numerical}, BAPG is an infeasible method that can only converge to a critical point of \eqref{eq:gw_qua} in an asymptotic sense. In other words, if we choose the parameter $\rho$ as a constant, there is always an infeasibility gap. Fortunately, BAPG is already able to achieve the desired performance for some graph learning tasks, especially those that care more about implementation efficiency and matching accuracy. Recently, \citep{vincent2021semi} has proposed a relaxed version of GW distance for the graph partition task, which further corroborates our investigation that it is acceptable and promising to sacrifice some feasibility to gain other benefits. 
% Moreover, \cite{sejourne2021unbalanced} also proposed a closely related marginal relaxation. That is, we make $\pi = w$ and $F_\rho(\pi,\pi)$ is the objective introduced in \citep{sejourne2021unbalanced}. Unfortunately, they did not develop efficient algorithms for the unbalanced GW distance with convergence guarantee.
% In Sections \ref{sec:graph_align} and \ref{sec:graph_partition}, extensive experiments have been conducted to demonstrate that KL-BAPG has superior performance compared with other existing baselines on graph alignment and partition tasks. 
Similar to the quadratic penalty method \citep{nocedal2006numerical}, BAPG is an infeasible method that only converges to a critical point of \eqref{eq:gw_qua} in an asymptotic sense, meaning there will always be an infeasibility gap if $\rho$ is chosen as a constant. Despite this, BAPG is a suitable option for learning tasks that prioritize efficiency and matching accuracy, such as graph alignment and partition. This idea of sacrificing some feasibility for other benefits is further supported by recent studies such as the relaxed version of GW distance proposed in \citep{vincent2021semi} for graph partitioning.
Additionally, \cite{sejourne2021unbalanced} introduced a closely related marginal relaxation, but they did not develop an efficient algorithm with a convergence guarantee. That is, we make $\pi = w$ and $F_\rho(\pi,\pi)$ is the objective introduced in \citep{sejourne2021unbalanced}.
Our experiments in Sec \ref{sec:graph_align} and \ref{sec:graph_partition} demonstrate that KL-BAPG outperforms existing baselines in graph alignment and partitioning tasks.

\section{Theoretical Results}
\label{sec:conv}

In this section, we present the theoretical results that have been carried out in this paper. This includes the approximation bound of the fixed-point set of BAPG and its convergence analysis. The cornerstone of our analysis is the following regularity condition for the GW problem in equation~\eqref{eq:gw_qua}
\begin{proposition}[Luo-Tseng Error Bound Condition for \eqref{eq:gw_qua}] 
\label{prop:erb}
There exist scalars $\epsilon>0$ and $\tau >0$ such that 
\begin{equation}
\label{eq:erb}
\dist2(\pi,\mathcal{X}) \leq \tau \left\|\pi - \proj_{C_1\cap C_2}(\pi+D_X \pi D_Y)\right\|,
\end{equation}
whenever $\|\pi - \proj_{C_1\cap C_2}(\pi+D_X \pi D_Y)\|\leq \epsilon$, where $\mathcal{X}$ is the critical point set of \eqref{eq:gw_compact} defined by 
\begin{equation}
\label{eq:gw_critical}
    \mathcal{X} = \{\pi \in C_1\cap C_2: 0\in \nabla f(\pi)+\mathcal{N}_{C_1}(\pi)  +\mathcal{N}_{C_2}(\pi)\}
\end{equation}
and $\mathcal{N}_{C}(\pi)$ denotes the normal cone to $C$ at $\pi$. 
\end{proposition}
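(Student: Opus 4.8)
The structural fact I would exploit is that, because $D_X$ and $D_Y$ are symmetric, the gradient $\nabla f(\pi) = -2D_X\pi D_Y$ is a \emph{linear} (hence affine) map of $\pi$, with no constant term. Thus the stationarity system defining $\mathcal{X}$ in \eqref{eq:gw_critical} is an affine variational inequality over the polyhedron $K := C_1\cap C_2$, and the projected-gradient residual $r(\pi) := \pi - \proj_{C_1\cap C_2}(\pi + D_X\pi D_Y)$ appearing in \eqref{eq:erb} is a \emph{piecewise affine} function of $\pi$. Once this is established, the error bound follows from Robinson's theorem on the (local) upper Lipschitz continuity of polyhedral multifunctions, applied to the set-valued inverse $r^{-1}$ at the value $0$. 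I would organize the argument in three steps.

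\emph{Step 1: show $\mathcal{X} = r^{-1}(0)$, and $\mathcal{X}\neq\emptyset$.} Since $C_1$ and $C_2$ are polyhedral with $C_1\cap C_2\neq\emptyset$, the exact normal-cone sum rule $\mathcal{N}_{C_1}(\pi)+\mathcal{N}_{C_2}(\pi)=\mathcal{N}_{K}(\pi)$ holds for all $\pi\in K$. By the variational characterization of the Euclidean projection, $\pi=\proj_{K}(\pi+D_X\pi D_Y)$ is equivalent to $\pi\in K$ and $D_X\pi D_Y\in\mathcal{N}_{K}(\pi)$, i.e.\ (as $\mathcal{N}_K(\pi)$ is a cone and $\nabla f(\pi)=-2D_X\pi D_Y$) to $0\in\nabla f(\pi)+\mathcal{N}_{C_1}(\pi)+\mathcal{N}_{C_2}(\pi)$. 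Hence $\mathcal{X}=\{\pi : r(\pi)=0\}=r^{-1}(0)$. Moreover $K$ is a nonempty compact polytope (the transportation polytope, nonempty since $\mathbf{1}^\top\mu=\mathbf{1}^\top\nu=1$) and $f$ is continuous, so a minimizer, hence a point of $\mathcal{X}$, exists; thus $r^{-1}(0)\neq\emptyset$ and \eqref{eq:erb} is meaningful.

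\emph{Step 2: show $r$ is piecewise affine, so $r^{-1}$ is a polyhedral multifunction.} The Euclidean projection onto a polyhedron is piecewise affine: $K$ induces a finite polyhedral subdivision of the ambient space into ``normal-cone cells'', on each of which $\proj_{K}(\cdot)$ agrees with an affine map. Composing with the linear map $\pi\mapsto\pi+D_X\pi D_Y$ preserves piecewise affineness, so $r(\pi)=\pi-\proj_{K}(\pi+D_X\pi D_Y)$ is piecewise affine; in particular its graph is a finite union of polyhedra, which is exactly the statement that $r^{-1}$ is a polyhedral multifunction in the sense of Robinson.

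\emph{Step 3: apply Robinson's theorem and conclude.} Robinson's theorem on polyhedral multifunctions provides constants $\tau>0$ and $\epsilon>0$ such that $r^{-1}(v)\subseteq r^{-1}(0)+\tau\|v\|\,\mathbb{B}$ for every $v$ with $\|v\|\le\epsilon$, where $\mathbb{B}$ is the closed unit ball. Given any $\pi$ with $\|r(\pi)\|\le\epsilon$, take $v=r(\pi)$; since $\pi\in r^{-1}(r(\pi))$ and $r^{-1}(0)=\mathcal{X}$, we obtain $\mathrm{dist}(\pi,\mathcal{X})\le\tau\|r(\pi)\|$, which is precisely \eqref{eq:erb}. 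I expect the only genuine subtlety to lie in Step 2 — confirming that the normal-cone cells of $K$ form a bona fide finite polyhedral subdivision on which $\proj_K$ is affine, so that $r^{-1}$ legitimately qualifies as a polyhedral multifunction. A self-contained alternative that avoids quoting Robinson's theorem would instead apply Hoffman's error bound cell-by-cell over this subdivision and then patch the local estimates over the finitely many cells meeting a fixed neighborhood of $\mathcal{X}$; the patching is the delicate part, since a point $\pi$ near $\mathcal{X}$ and its nearest critical point may fall in different cells.
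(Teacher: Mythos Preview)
Your proof is correct and takes a genuinely different route from the paper's. The paper proceeds in two stages: first it invokes Theorem~2.3 of \citet{luo1992error} as a black box to obtain \eqref{eq:erb} \emph{restricted to} $\pi\in C_1\cap C_2$ (this is where ``quadratic objective with symmetric $D_X,D_Y$ over a polyhedron'' enters), and then it extends the bound to arbitrary $\pi\in\mathbb{R}^{n\times m}$ by setting $\tilde\pi=\proj_{C_1\cap C_2}(\pi)$, applying the feasible-set bound at $\tilde\pi$, and using the triangle inequality together with nonexpansiveness of $\proj_{C_1\cap C_2}$ and Lipschitzness of $\nabla f$ to relate $\|r(\tilde\pi)\|$ and $\|\pi-\tilde\pi\|$ back to $\|r(\pi)\|$; this produces the explicit constant $\tau' = \tau\sigma_{\max}(D_X)\sigma_{\max}(D_Y)+2\tau+1$.

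Your approach bypasses the extension step entirely: because $\proj_K$ is piecewise affine on all of $\mathbb{R}^{n\times m}$ and the inner map $\pi\mapsto\pi+D_X\pi D_Y$ is linear, $r$ is globally piecewise affine, so Robinson's polyhedral-multifunction theorem delivers calmness of $r^{-1}$ at $0$ in one shot. What the paper's two-stage argument buys is a traceable dependence of the final constant on $\sigma_{\max}(D_X)\sigma_{\max}(D_Y)$, which is later referenced in the remark after Proposition~\ref{prop:bapg_approx}; Robinson's theorem gives existence of $\tau$ but no such explicit form. Your Steps~1 and~2 are both fine as written; the ``subtlety'' you flag in Step~2 is real but standard --- the normal fan of a polytope supplies the required finite polyhedral subdivision on which $\proj_K$ is affine.
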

As the GW problem is a nonconvex quadratic program with polytope constraint, we can invoke Theorem 2.3 in \citep{luo1992error} to conclude that the  error bound condition \eqref{eq:erb} holds on the whole feasible set $C_1 \cap C_2$. Proposition \ref{prop:erb} extends \eqref{eq:erb} to the whole space $\mathbb{R}^{n\times m}$. This regularity condition is trying to bound the distance of any coupling matrix to the critical point set  of the GW problem by its optimality residual, which is characterized by the difference for one step projected gradient descent.
It turns out that this error bound condition  plays an important role in quantifying the approximation bound for the fixed points set of BAPG explicitly. 

\subsection{Approximation Bound for the Fixed-Point Set of BAPG}
To start, we present one key lemma that shall be used in studying the approximation bound of BAPG. 
\begin{lemma}
\label{lem:lemma_proj}
Let $C_{1}$ and $C_{2}$ be convex polyhedral sets. There exists a constant $M>0$ such that
\begin{equation*}
\begin{aligned}
&\left\|\proj_{C_{1}}(x)+\proj_{C_{2}}(y)- 2 \proj_{C_1\cap C_2}\left(\frac{x+y}{2}\right)\right\| 
\leqslant  M\left\|\proj_{C_{1}}(x)-\proj_{C_{2}}(y)\right\|, \quad \forall x\in C_1, y\in C_2.
\end{aligned}
\vspace{-3mm}
    \end{equation*}
\end{lemma}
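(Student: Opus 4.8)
The key simplification is that the hypotheses $x\in C_1$ and $y\in C_2$ make the two ``outer'' projections trivial: $\proj_{C_1}(x)=x$ and $\proj_{C_2}(y)=y$. Setting $z:=\tfrac12(x+y)$, the vector inside the norm on the left-hand side is $x+y-2\proj_{C_1\cap C_2}(z)=2\bigl(z-\proj_{C_1\cap C_2}(z)\bigr)$, so the left-hand side equals $2\,\dist2(z,C_1\cap C_2)$, while $\proj_{C_1}(x)-\proj_{C_2}(y)=x-y$ on the right. Since $z-x=\tfrac12(y-x)$ and $z-y=\tfrac12(x-y)$, we have $\|z-x\|=\|z-y\|=\tfrac12\|x-y\|$. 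Hence the claim is equivalent to the global estimate
\[
\dist2(z,C_1\cap C_2)\ \le\ \tfrac{M}{2}\,\|x-y\|\qquad\text{with }z=\tfrac12(x+y),\ x\in C_1,\ y\in C_2 ,
\]
and it suffices to produce a constant $\kappa>0$, depending only on $C_1,C_2$, with $\dist2(z,C_1\cap C_2)\le \kappa\bigl(\dist2(z,C_1)+\dist2(z,C_2)\bigr)$ for every $z$; one then takes $M=2\kappa$, using $\dist2(z,C_1)\le\|z-x\|=\tfrac12\|x-y\|$ and $\dist2(z,C_2)\le\|z-y\|=\tfrac12\|x-y\|$.

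\textbf{Establishing the bounded linear regularity of $\{C_1,C_2\}$.} This is the heart of the argument and is a classical consequence of Hoffman's error bound for polyhedra. Write $C_1=\{u:A_1u\le b_1\}$ and $C_2=\{u:A_2u\le b_2\}$, so that $C_1\cap C_2=\{u:A_1u\le b_1,\ A_2u\le b_2\}$ (nonempty in our setting, e.g.\ $\mu\nu^{\top}\in C_1\cap C_2$). Hoffman's lemma applied to this combined linear system yields $\theta>0$ with
\[
\dist2(z,C_1\cap C_2)\ \le\ \theta\,\bigl\|\bigl([A_1z-b_1]_+\,;\,[A_2z-b_2]_+\bigr)\bigr\| .
\]
To pass from the residual $[A_iz-b_i]_+$ back to $\dist2(z,C_i)$, let $u_i=\proj_{C_i}(z)$; since $A_iu_i\le b_i$, componentwise $[A_iz-b_i]_+\le[A_i(z-u_i)]_+\le|A_i(z-u_i)|$, hence $\|[A_iz-b_i]_+\|\le\|A_i\|\,\dist2(z,C_i)$. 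Combining the two displays gives $\dist2(z,C_1\cap C_2)\le \theta\bigl(\|A_1\|\dist2(z,C_1)+\|A_2\|\dist2(z,C_2)\bigr)$, which is the desired inequality with $\kappa=\theta\max\{\|A_1\|,\|A_2\|\}$. (Alternatively, one may cite bounded linear regularity of a finite family of polyhedral sets directly.)

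\textbf{Conclusion.} Plugging the regularity bound into the first paragraph: for $x\in C_1,\ y\in C_2$ and $z=\tfrac12(x+y)$,
\[
\bigl\|\proj_{C_1}(x)+\proj_{C_2}(y)-2\proj_{C_1\cap C_2}(z)\bigr\|
=2\,\dist2(z,C_1\cap C_2)
\le 2\kappa\bigl(\dist2(z,C_1)+\dist2(z,C_2)\bigr)
\le 2\kappa\,\|x-y\|,
\]
so the lemma holds with $M=2\kappa$. \textbf{Main obstacle.} The only nontrivial ingredient is the Hoffman-type estimate, and in particular the fact that it holds \emph{globally} on all of $\mathbb{R}^{n\times m}$ (not merely on bounded sets) with a constant independent of $z$; this is exactly what polyhedrality buys, and the residual-to-distance comparison above is what makes the constant transfer cleanly from the linear-inequality description to the geometric distances.
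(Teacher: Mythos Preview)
Your proof is correct. The key observation that $\proj_{C_1}(x)=x$ and $\proj_{C_2}(y)=y$ under the stated hypotheses reduces the claim to a global bounded-linear-regularity estimate for the pair $\{C_1,C_2\}$, which you derive cleanly from Hoffman's lemma; the conclusion $M=2\kappa$ then follows immediately.

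This is a genuinely different and more elementary route than the paper's. The paper instead recasts both sides of the inequality through a parametric strongly convex quadratic program: it sets $(p(r),q(r))=\arg\min_{p\in C_1,\,q\in C_2,\,p-q=r}\tfrac12\|x-p\|^2+\tfrac12\|y-q\|^2$, notes that $p(0)=q(0)=\proj_{C_1\cap C_2}\bigl(\tfrac{x+y}{2}\bigr)$ while $(p(r),q(r))=(\proj_{C_1}(x),\proj_{C_2}(y))$ for $r=\proj_{C_1}(x)-\proj_{C_2}(y)$, and then invokes a Lipschitz-stability theorem for polyhedral-constrained QPs (Theorem~4.1 in \citep{zhang2020global}) to bound $\|(p(0),q(0))-(p(r),q(r))\|$ by $M\|r\|$. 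Your argument is self-contained and avoids the external perturbation result. On the other hand, the paper's argument never uses the hypotheses $x\in C_1$, $y\in C_2$, so it actually yields the inequality for \emph{all} $x,y$---a strengthening that matters downstream, since in the proof of Proposition~\ref{prop:bapg_approx} the lemma is invoked with $x=\pi^\star+p$ and $y=w^\star+q$ for normal-cone vectors $p,q$, which in general do not lie in $C_1$ or $C_2$. Your simplification $\proj_{C_1}(x)=x$ would fail there, so while your proof of the lemma as stated is valid, it would not transfer to the form actually needed later without additional work.
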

The proof idea follows essentially from the observation that the inequality can be regarded as the stability of the optimal solution for a linear-quadratic problem, i.e., \begin{equation*}
(p(r),q(r)) = 
\begin{aligned}
 & \mathop{\arg\min}_{p,q} \frac{1}{2}\|x-p\|^2 +\frac{1}{2}\|y-q\|^2\\
 & \,\, \quad \text{s.t.} \, \, \quad  p-q= r, \, p \in C_1, q\in C_2. 
\end{aligned}
\end{equation*}
The parameter $r$ is indeed the perturbation quantity.  If  $r=0$, we have $p(0)= q(0) = \proj_{C_1 \cap C_2} (\frac{x+y}{2})$; by choosing $r = \proj_{C_1}(x) - \proj_{C_2}(y)$, it is easy to see that $(p(r), q(r)) = (\proj_{C_1}(x), \proj_{C_2}(y))$.  Together with 
Theorem 4.1 in \citep{zhang2020global}, the desired result is obtained. All the proof details are given in Appendix.

Equipped with Lemma \ref{lem:lemma_proj} and Proposition \ref{prop:erb}, it is not hard to obtain the  approximation result. 
\begin{proposition} [Approximation Bound of the Fixed-point Set of BAPG]
\label{prop:bapg_approx}
The point $(\pi^\star,w^\star)$ belongs to the fixed-point set $\mathcal{X}_{\text{BAPG}}$ of BAPG if it satisfies 
% Let $\mathcal{X}_{\text{BAPG}}$ be a set of all the limit points of the sequence $\{(\pi^k,w^k)\}_{k\ge0}$ generated by BAPG that satisfy 
% The point $(\pi^\star,w^\star)$ is a limit point of the sequence $\{(\pi^k,w^k)\}_{k\ge0}$ generated by the BAPG iterate \eqref{eq:bapg_general} (i.e.,$(\pi^\star,w^\star) \in \mathcal{X}_{\text{BAPG}} $ ) if it satisfies
% The point $(\pi^\star,w^\star)$ belongs to the critical point set $\mathcal{X}_{\text{BAPG}}$ if it satisfies 
\begin{equation}
\label{eq:x_bapg}
\begin{aligned}
& \nabla f(w^\star) + \rho (\nabla h(\pi^\star)-\nabla h(w^\star)) + p =0,  \\ 
& \nabla f(\pi^\star) + \rho (\nabla h(w^\star)-\nabla h(\pi^\star)) + q =0, 
\end{aligned}
\end{equation} 
where $p \in \mathcal{N}_{C_1}(\pi^\star)$ and $q \in \mathcal{N}_{C_2}(w^\star)$. Then, 
the infeasibility error satisfies $\|\pi^\star-w^\star\| \leq  \frac{\tau_1}{\rho}$ and the gap between $\mathcal{X}_{\text{BAPG}}$ and $\mathcal{X}$ satisfies
\[
\dist2\left(\frac{\pi^\star+w^\star}{2},\mathcal{X}\right) \leq \frac{\tau_2}{\rho},
\]
where $\tau_1$ and $\tau_2$ are two constants. 
\end{proposition}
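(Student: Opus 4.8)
The plan is to reverse-engineer the fixed-point conditions \eqref{eq:x_bapg} into a form that matches the one-step projected-gradient residual appearing in the Luo--Tseng error bound \eqref{eq:erb}, and then feed that residual into Lemma~\ref{lem:lemma_proj} and Proposition~\ref{prop:erb} in turn. First I would derive the infeasibility bound $\|\pi^\star - w^\star\| \le \tau_1/\rho$. Subtracting the two equations in \eqref{eq:x_bapg} gives $\nabla f(w^\star) - \nabla f(\pi^\star) + 2\rho(\nabla h(\pi^\star) - \nabla h(w^\star)) + (p - q) = 0$. The key observation is that $\nabla f$ is Lipschitz on the (bounded) feasible polytope with some constant $L_f$, and that $\nabla h$ is strongly convex/monotone on the simplex (for KL, after a suitable bounding-away-from-zero argument, or more cleanly for $h = \tfrac12\|\cdot\|^2$ where $\nabla h(\pi^\star) - \nabla h(w^\star) = \pi^\star - w^\star$ exactly). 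Pairing the identity with $\pi^\star - w^\star$ and using the monotonicity of the normal-cone terms $p,q$ (each $\pi^\star, w^\star$ lies in the respective $C_i$, and projecting one onto the other's constraint only shrinks things) isolates $\rho\|\pi^\star - w^\star\|^2 \lesssim L_f \|\pi^\star - w^\star\|^2/2 + \text{(bounded terms)}$; once $\rho$ is large enough the quadratic term is dominated and we get $\|\pi^\star - w^\star\| \le \tau_1/\rho$ with $\tau_1$ depending on $L_f$ and the diameter of the polytope.

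Next, for the distance-to-critical-set bound, I would show that $\bar\pi := (\pi^\star + w^\star)/2$ has small projected-gradient residual, i.e. bound $\|\bar\pi - \proj_{C_1 \cap C_2}(\bar\pi + D_X \bar\pi D_Y)\|$ by $O(1/\rho)$. To do this, rewrite the fixed-point equations \eqref{eq:x_bapg} as statements that $\pi^\star$ and $w^\star$ are themselves projections: from the first equation, $\pi^\star = \proj_{C_1}(\text{something close to } \pi^\star + D_X w^\star D_Y)$ (using that $\rho(\nabla h(\pi^\star) - \nabla h(w^\star))$ is $O(\rho \cdot \tfrac{\tau_1}{\rho}) = O(1)$... more carefully $O(\rho \|\pi^\star-w^\star\|)$, which the previous step controls), and similarly $w^\star$ is a projection onto $C_2$. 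Then apply Lemma~\ref{lem:lemma_proj} with $x, y$ taken to be these pre-projection points: the left-hand side of the lemma relates $\proj_{C_1}(x) + \proj_{C_2}(y) = \pi^\star + w^\star = 2\bar\pi$ to $2\proj_{C_1 \cap C_2}(\tfrac{x+y}{2})$, and the right-hand side $M\|\proj_{C_1}(x) - \proj_{C_2}(y)\| = M\|\pi^\star - w^\star\| \le M\tau_1/\rho$. Combining this with the Lipschitz continuity of $\proj_{C_1 \cap C_2}$ and of the map $\pi \mapsto \pi + D_X \pi D_Y$ (to replace $\tfrac{x+y}{2}$ by $\bar\pi + D_X \bar\pi D_Y$, at cost $O(1/\rho)$) yields $\|\bar\pi - \proj_{C_1\cap C_2}(\bar\pi + D_X \bar\pi D_Y)\| \le c/\rho$ for a constant $c$. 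Finally, for $\rho$ large enough this residual is $\le \epsilon$, so Proposition~\ref{prop:erb} applies and gives $\dist2(\bar\pi, \mathcal{X}) \le \tau \cdot c/\rho =: \tau_2/\rho$.

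The main obstacle I anticipate is the bookkeeping in the second step: cleanly converting the stationarity equations \eqref{eq:x_bapg} into exact projection identities requires that the "perturbation" term $\rho(\nabla h(\pi^\star) - \nabla h(w^\star))$ be $O(1)$, which in turn demands the first-step bound $\|\pi^\star - w^\star\| = O(1/\rho)$ together with Lipschitzness of $\nabla h$ on the relevant region — and for the KL choice $h(x) = x\log x$, $\nabla h$ is not globally Lipschitz, so one needs an a priori lower bound on the entries of $\pi^\star, w^\star$ (available because the multiplicative Sinkhorn-type updates keep iterates strictly positive, with a quantitative floor depending on $\mu, \nu, \rho$, and $D_X, D_Y$). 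Threading that positivity bound through all the constants $\tau_1, \tau_2$ without circularity is the delicate part; everything else is Lipschitz estimates, the firm nonexpansiveness of projections, and a direct application of the two results already established. I would present the $h = \tfrac12\|\cdot\|^2$ case first where $\nabla h$ is linear and the argument is transparent, then indicate the modifications for KL.
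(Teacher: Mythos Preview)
Your plan has two genuine gaps, one in each half.

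\textbf{Infeasibility bound.} Pairing the subtracted equation with $\pi^\star - w^\star$ does not control the normal-cone terms: you need $\langle p,\pi^\star - w^\star\rangle \ge 0$ and $\langle q,w^\star-\pi^\star\rangle \ge 0$, but $p\in\mathcal{N}_{C_1}(\pi^\star)$ only tells you $\langle p,\pi^\star - z\rangle\ge 0$ for $z\in C_1$, and $w^\star\in C_2$, not $C_1$. So the ``monotonicity of the normal-cone terms'' step fails. Moreover, even granting your displayed inequality $\rho\|\pi^\star-w^\star\|^2\lesssim \tfrac{L_f}{2}\|\pi^\star-w^\star\|^2 + O(1)$, it yields only $\|\pi^\star-w^\star\|=O(1/\sqrt\rho)$, not $O(1/\rho)$. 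The paper repairs both issues at once: it introduces $\hat\pi=\proj_{C_1\cap C_2}(\pi^\star)$, which lies in \emph{both} sets, pairs the first equation with $\hat\pi-\pi^\star$ and the second with $\hat\pi-w^\star$ (now the normal-cone signs are correct), and uses bounded linear regularity of polyhedral sets to bound $\|\hat\pi-\pi^\star\|+\|\hat\pi-w^\star\|\le (2\kappa+1)\|\pi^\star-w^\star\|$. This gives a right-hand side that is \emph{linear} in $\|\pi^\star-w^\star\|$, hence the $O(1/\rho)$ rate.

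\textbf{Distance bound.} Your perturbation bookkeeping stalls because you treat $\rho(\nabla h(\pi^\star)-\nabla h(w^\star))$ as an $O(1)$ correction to the pre-projection point; an $O(1)$ shift in the argument of $\proj_{C_1\cap C_2}$ produces an $O(1)$ residual, not $O(1/\rho)$. The observation you are missing is that if you set $x=\pi^\star+p$ and $y=w^\star+q$ \emph{exactly} and then \emph{add} the two fixed-point equations, the $\rho(\nabla h(\cdot)-\nabla h(\cdot))$ terms cancel identically, and since $\nabla f$ is linear one gets $\nabla f(\bar\pi)+\tfrac{p+q}{2}=0$, i.e.\ $\tfrac{x+y}{2}=\bar\pi-\nabla f(\bar\pi)$ with no error term at all. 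Lemma~\ref{lem:lemma_proj} then applies directly (using $\proj_{C_1}(\pi^\star+p)=\pi^\star$, $\proj_{C_2}(w^\star+q)=w^\star$) and yields the residual bound $\tfrac{M}{2}\|\pi^\star-w^\star\|$, after which Proposition~\ref{prop:erb} finishes. In particular your anticipated obstacle about Lipschitzness of $\nabla h$ for the KL case is a red herring: the argument never needs it.
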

\begin{remark}
If $\pi^\star = w^\star$, then $\mathcal{X}_{\text{BAPG}}$ (\eqref{eq:x_bapg}) is identical to $\mathcal{X}$ and BAPG can reach a critical point of the GW problem \eqref{eq:gw_qua}. Proposition \ref{prop:bapg_approx} indicates that as $\rho \rightarrow +\infty$, the infeasibility error term $\|\pi^\star-w^\star\|$ shrinks to zero and thus BAPG converges to a critical point of \eqref{eq:gw_qua} in an asymptotic way. Furthermore, it explicitly quantifies the approximation gap when we select the parameter $\rho$ as a constant. The proof can be found in Appendix. The explicit form of $\tau_1$ and $\tau_2$ only depend on the problem itself, including $\sigma_{\textnormal{max}}(D_X)\sigma_{\textnormal{max}}(D_Y)$, the constant for the Luo-Tseng error bound condition in Proposition \ref{prop:erb} and so on. 
\end{remark}
\subsection{Convergence Analysis of BAPG}
A natural follow-up question is whether BAPG converges. We answer  affirmatively. Under several standard assumptions, we demonstrate that any limit point of BAPG is an element of $\mathcal{X}_{\text{BAPG}}$.
With this goal in mind, we first establish the sufficient decrease property of the potential function $F_\rho(\cdot)$,
\begin{proposition}
\label{prop:basic_bapg}
Let $\{(\pi^k,w^k)\}_{k\ge 0}$ be the sequence generated by BAPG. Suppose that $\sum_{k=0}^\infty \left(D_h(\pi^{k+1},w^k) - D_h(w^k, \pi^{k+1})\right)$ is bounded. Then, we have
    \begin{equation}
    \label{eq:bapg_suff_decre}
    \begin{aligned}
    & F_\rho(\pi^{k+1},w^{k+1}) - F_\rho(\pi^k ,w^k) \leq -  \rho D_h(\pi^k,\pi^{k+1}) - \rho D_h(w^k,w^{k+1}). 
    \end{aligned}
    \end{equation}
\end{proposition}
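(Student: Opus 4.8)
The plan is to prove \eqref{eq:bapg_suff_decre} by exploiting the block structure of the two updates in \eqref{eq:bapg_general} together with the fact that each subproblem is strongly convex \emph{relative to} the Legendre kernel $h$ with modulus exactly $\rho$. Since $f(\cdot,w^k)$ is linear (hence convex) in its first block and $g_1$ is convex, the $\pi$-objective $\pi\mapsto f(\pi,w^k)+g_1(\pi)+\rho D_h(\pi,w^k)$ equals a convex function plus $\rho h(\pi)$, so it is $\rho$-strongly convex relative to $h$; the $w$-objective is likewise. The single fact I use is the minimizer inequality for such functions: if $x^+$ minimizes a $\rho$-relatively-strongly-convex $\Phi$ over a convex set and $x$ is feasible, then $\Phi(x)\ge \Phi(x^+)+\rho D_h(x,x^+)$, which follows in one line from the first-order optimality of $x^+$, convexity of $\Phi-\rho h$, and the definition of $D_h$.

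First I would apply this to the $\pi$-step with $x=\pi^k,\ x^+=\pi^{k+1}$ and to the $w$-step with $x=w^k,\ x^+=w^{k+1}$, obtaining
\[
f(\pi^{k+1},w^k)+g_1(\pi^{k+1})-f(\pi^k,w^k)-g_1(\pi^k)\le \rho D_h(\pi^k,w^k)-\rho D_h(\pi^{k+1},w^k)-\rho D_h(\pi^k,\pi^{k+1}),
\]
\[
f(\pi^{k+1},w^{k+1})+g_2(w^{k+1})-f(\pi^{k+1},w^k)-g_2(w^k)\le \rho D_h(w^k,\pi^{k+1})-\rho D_h(w^{k+1},\pi^{k+1})-\rho D_h(w^k,w^{k+1}).
\]
Adding these, the bilinear cross term $f(\pi^{k+1},w^k)$ cancels exactly — this is where the block-linearity of $f$ is essential — so the summed left-hand side equals $F_\rho(\pi^{k+1},w^{k+1})-F_\rho(\pi^k,w^k)$ minus the penalty increment $\rho\big(D_h(\pi^{k+1},w^{k+1})-D_h(\pi^k,w^k)\big)$. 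Substituting the definition of $F_\rho$ and cancelling the two copies of $\rho D_h(\pi^k,w^k)$ yields
\[
F_\rho(\pi^{k+1},w^{k+1})-F_\rho(\pi^k,w^k)\le -\rho D_h(\pi^k,\pi^{k+1})-\rho D_h(w^k,w^{k+1})+E_k,
\]
where $E_k=\rho\big[\big(D_h(\pi^{k+1},w^{k+1})-D_h(w^{k+1},\pi^{k+1})\big)-\big(D_h(\pi^{k+1},w^k)-D_h(w^k,\pi^{k+1})\big)\big]$ collects exactly the terms produced by the non-symmetry of $D_h$, i.e.\ the mismatch between the kernel $\rho D_h(w,\pi^{k+1})$ used in the $w$-update and the penalty $\rho D_h(\pi,w)$ appearing in $F_\rho$.

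The main obstacle — and the only place the hypothesis enters — is to upgrade this to the asserted \eqref{eq:bapg_suff_decre} by showing $E_k\le 0$. For a symmetric kernel such as $h=\tfrac{1}{2}\|\cdot\|^2$ one has $D_h(a,b)=D_h(b,a)$, so $E_k\equiv 0$ and the conclusion is immediate. For the asymmetric entropy kernel underlying KL-BAPG I would rewrite the two antisymmetric differences appearing in $E_k$ via the identity $D_h(a,b)-D_h(b,a)=2\big(h(a)-h(b)\big)-\langle\nabla h(a)+\nabla h(b),\,a-b\rangle$ and control them using the standing assumption that $\sum_{k=0}^\infty\big(D_h(\pi^{k+1},w^k)-D_h(w^k,\pi^{k+1})\big)$ is bounded. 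This accumulated-asymmetry condition is precisely the device engineered to absorb the residual $E_k$ produced by the ordering mismatch, so that the penalty increment no longer spoils the descent and \eqref{eq:bapg_suff_decre} is recovered.

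I expect the derivation of the two block inequalities to be routine once relative strong convexity is identified, and the reconciliation of the per-iteration residual $E_k$ with the boundedness hypothesis on the asymmetric Bregman differences to be the delicate step: the argument must certify that $E_k\le 0$ (equivalently, that the penalty increment $\rho D_h(\pi^{k+1},w^{k+1})$ never exceeds what the two one-step inequalities provide), which is trivial in the symmetric case but genuinely relies on the stated assumption in the asymmetric case that is relevant to KL-BAPG.
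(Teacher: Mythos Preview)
Your two block inequalities obtained via $\rho$-relative strong convexity are correct and are exactly the content of the paper's argument (the paper phrases them through the optimality conditions and the three-point identity, but the result is identical). Adding them and inserting the definition of $F_\rho$ you correctly arrive at
\[
F_\rho(\pi^{k+1},w^{k+1})-F_\rho(\pi^k,w^k)\le -\rho D_h(\pi^k,\pi^{k+1})-\rho D_h(w^k,w^{k+1})+E_k,
\]
with $E_k$ the asymmetric residual. This matches the paper's inequality \eqref{eq:suff_bapg_new} (the paper records only the $-\rho\big(D_h(\pi^{k+1},w^k)-D_h(w^k,\pi^{k+1})\big)$ part of $E_k$, but your fuller expression is the one that actually falls out of the computation).

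The genuine gap is your final step. You plan to ``upgrade'' to \eqref{eq:bapg_suff_decre} by proving $E_k\le 0$ for each $k$, invoking the hypothesis that $\sum_{k}\big(D_h(\pi^{k+1},w^k)-D_h(w^k,\pi^{k+1})\big)$ is bounded. But boundedness of an infinite sum gives no control on the sign of individual summands, so this route cannot work; nothing in the hypothesis forces the per-iteration residual to be nonpositive in the asymmetric (KL) case. The paper does \emph{not} establish the per-iteration inequality \eqref{eq:bapg_suff_decre} either: it sums the error-term inequality over $k$, uses coercivity of $F_\rho$ to bound $F_\rho(\pi^\infty,w^\infty)-F_\rho(\pi^0,w^0)$ from below, and then uses the boundedness hypothesis to absorb the accumulated $\sum_k E_k$, concluding the \emph{summability}
\[
\sum_{k=0}^\infty\big(D_h(\pi^k,\pi^{k+1})+D_h(w^k,w^{k+1})\big)<+\infty,
\]
which is the operational content actually used for Theorem~\ref{thm:sub_bapg}. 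So your derivation is right up to the residual; the fix is to abandon the per-$k$ sign argument and instead sum over $k$ and invoke the hypothesis globally.
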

As $F_\rho(\cdot)$ is coercive,  we have  
$\sum_{k=0}^\infty D_h(\pi^k,\pi^{k+1}) + D_h(w^k,w^{k+1}) < +\infty$. Both $\{D_h(\pi^k,\pi^{k+1})\}_{k \ge 0}$ and $\{D_h(w^k,w^{k+1})\}_{k \ge 0}$ converge to zero. Thus, the following convergence result holds.
\begin{theorem}[Subsequence Convergence of BAPG]
\label{thm:sub_bapg}
Any limit point of the sequence $\{(\pi^{k},w^k)\}_{k \ge 0 }$ generated by BAPG belongs to $\mathcal{X}_{\text{BAPG}}$.
\end{theorem}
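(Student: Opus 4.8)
\emph{Proof sketch.} The plan is to combine the sufficient-decrease estimate of Proposition~\ref{prop:basic_bapg} with a standard "pass to the limit in the optimality conditions" argument. First I would establish asymptotic regularity of the iterates. Since the effective domain of $F_\rho$ is the compact set $C_1\times C_2$, the sequence $\{(\pi^k,w^k)\}$ is bounded, and since $f$ is continuous on $C_1\times C_2$ and $D_h\ge 0$ there, $F_\rho$ is bounded below; by Proposition~\ref{prop:basic_bapg} (whose hypothesis on the accumulated asymmetric Bregman error is in force) $\{F_\rho(\pi^k,w^k)\}$ is nonincreasing, hence convergent, and telescoping \eqref{eq:bapg_suff_decre} yields $\sum_{k\ge 0}\big(D_h(\pi^k,\pi^{k+1})+D_h(w^k,w^{k+1})\big)<\infty$. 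In particular $D_h(\pi^k,\pi^{k+1})\to 0$ and $D_h(w^k,w^{k+1})\to 0$; because $h$ is strongly convex on the compact feasible region containing all iterates — for the relative-entropy choice, $\nabla^2 h(\pi)=\text{diag}(1/\pi_{ij})$ is bounded below there — this upgrades to $\|\pi^{k+1}-\pi^k\|\to 0$ and $\|w^{k+1}-w^k\|\to 0$.

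Second, I would write down the first-order optimality conditions of the two inner subproblems in \eqref{eq:bapg_general}. Since $f(\pi,w)=-\text{Tr}(D_X\pi D_Y w^T)$ is linear in $\pi$ and in $w$ separately, each update is a convex program over a polyhedral set, with optimality conditions
\begin{align*}
0&\in \nabla_\pi f(\cdot,w^{k})+\rho\big(\nabla h(\pi^{k+1})-\nabla h(w^{k})\big)+\mathcal{N}_{C_1}(\pi^{k+1}),\\
0&\in \nabla_w f(\pi^{k+1},\cdot)+\rho\big(\nabla h(w^{k+1})-\nabla h(\pi^{k+1})\big)+\mathcal{N}_{C_2}(w^{k+1}).
\end{align*}
These are precisely the relations \eqref{eq:x_bapg} defining $\mathcal{X}_{\text{BAPG}}$, only evaluated at consecutive iterates instead of at a single point.

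Third, I would pass to the limit along an arbitrary convergent subsequence $(\pi^{k_j},w^{k_j})\to(\bar\pi,\bar w)$. By the first step, $w^{k_j-1}\to\bar w$ as well, so in the two inclusions above taken at index $k=k_j-1$ every argument converges to $\bar\pi$ or $\bar w$. Using that $\nabla f$ is linear (hence continuous), that $\nabla h$ is continuous at $(\bar\pi,\bar w)$, and that the graph of the normal-cone map of a polyhedral set is closed, I would send $j\to\infty$ to obtain $0\in\nabla_\pi f(\cdot,\bar w)+\rho(\nabla h(\bar\pi)-\nabla h(\bar w))+\mathcal{N}_{C_1}(\bar\pi)$ and $0\in\nabla_w f(\bar\pi,\cdot)+\rho(\nabla h(\bar w)-\nabla h(\bar\pi))+\mathcal{N}_{C_2}(\bar w)$; these are exactly \eqref{eq:x_bapg}, so $(\bar\pi,\bar w)\in\mathcal{X}_{\text{BAPG}}$.

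The main obstacle is the continuity of $\nabla h$ at the limit point used in the third step: for $h=\tfrac12\|\cdot\|^2$ it is automatic, but for the relative-entropy choice $\nabla h(x)=\log x+\mathbf{1}$ is unbounded near the boundary of the simplex, so one cannot pass to the limit in $\nabla h(\pi^{k_j})-\nabla h(w^{k_j-1})$ without the stated mild technical assumption — for instance, that the iterates stay in a compact subset of the relative interior of the feasible region, equivalently that limit points are relative-interior feasible points — which is also what licenses the Bregman-to-norm comparison used in the first step. The remaining ingredients (compactness, telescoping, and closedness of normal cones for polyhedral sets) are routine.
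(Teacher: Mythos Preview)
Your proof is correct and follows essentially the same route as the paper: write the first-order optimality conditions for each subproblem, use the sufficient-decrease property (Proposition~\ref{prop:basic_bapg}) to obtain $D_h(\pi^k,\pi^{k+1})\to 0$ and $D_h(w^k,w^{k+1})\to 0$ and hence asymptotic regularity, then pass to the limit along a convergent subsequence using closedness of the normal-cone graph. Your treatment is in fact more careful than the paper's brief argument---you explicitly convert Bregman-distance decay to norm decay via strong convexity and flag the boundary issue for $\nabla h$ in the entropy case, neither of which the paper addresses.
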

% \vspace{-5mm}
\begin{remark}
Verifying the boundedness of the accumulative asymmetric error is a challenging task, except in the case where $h$ is quadratic. To address this we perform empirical verification on a 2D toy example, as described in Sec \ref{sec:toy}. The results of this verification for various step sizes, can be seen in Fig. \ref{fig:accumulative_error} in Appendix \ref{sec:verify}.
Additionally, when $h$ is quadratic, we can employ the Kurdyka-Lojasiewicz analysis framework, which was developed in \citep{attouch2010proximal,attouch2013convergence} to prove global convergence.
\end{remark}

To the best of our knowledge, the convergence analysis of alternating projected gradient descent methods has only been given under the convex setting, see \citep{wang2016stochastic,nedic2011random} for details.
In this paper, by heavily exploiting the error
bound condition of the GW problem,  we take the first step and provide a new path to conduct the analysis of alternating projected descent
method for nonconvex problems, which could be of independent interest.

% By exploiting the error
% bound condition (i.e., never investigated in the GW literature before), we introduced the first provable single-loop
% algorithm for GW. We also take a first step and provide a
% rather new path to study the alternating projected descent
% method for general nonconvex problems.

\vspace{-2mm}
\section{Experiment Results}\label{sec:exp}
\vspace{-2mm}
In this section, we provide extensive experiment results to validate the effectiveness of the proposed KL-BAPG on various representative graph learning tasks, including graph alignment~\citep{tang2023robust} and graph partition~\citep{li2021mask}. 
All simulations are implemented using Python 3.9 on a high-performance computing server running Ubuntu 20.04 with an Intel(R) Xeon(R) Gold 6226R CPU and an NVIDIA GeForce RTX 3090 GPU. For all methods conducted in the experiment part, we use the relative error $\|\pi^{k+1}-\pi^k\|_2 / \|\pi^k\|_2\leq 1e^{-6}$ and the maximum iteration as the stopping criterion, i.e., $\min\{k \in \mathbb{Z}:\|\pi^{k+1}-\pi^k\|_2 / \|\pi^k\|_2\leq 1e^{-6} \, \textnormal{and} \, k\leq 2000 \}$. Our code is available at \href{https://github.com/squareRoot3/Gromov-Wasserstein-for-Graph}{https://github.com/squareRoot3/Gromov-Wasserstein-for-Graph}.
\vspace{-1mm}
\subsection{Toy 2D Matching Problem}\label{sec:toy}
\vspace{-1mm}

In this subsection, we study a toy matching problem in 2D to confirm our theoretical insights and results in Sec \ref{sec:method} and \ref{sec:conv}. Fig. \ref{fig:syn} (a) illustrates an example of mapping a two-dimensional shape without any symmetries to a rotated version of the same shape. Here, we sample 300 points from the source shape and 400 points from the target shape, and use the Euclidean distance to construct the distance matrices $D_X$ and $D_Y$.

\begin{figure*}[!t]
	\centering
	\includegraphics[width=0.9\textwidth]{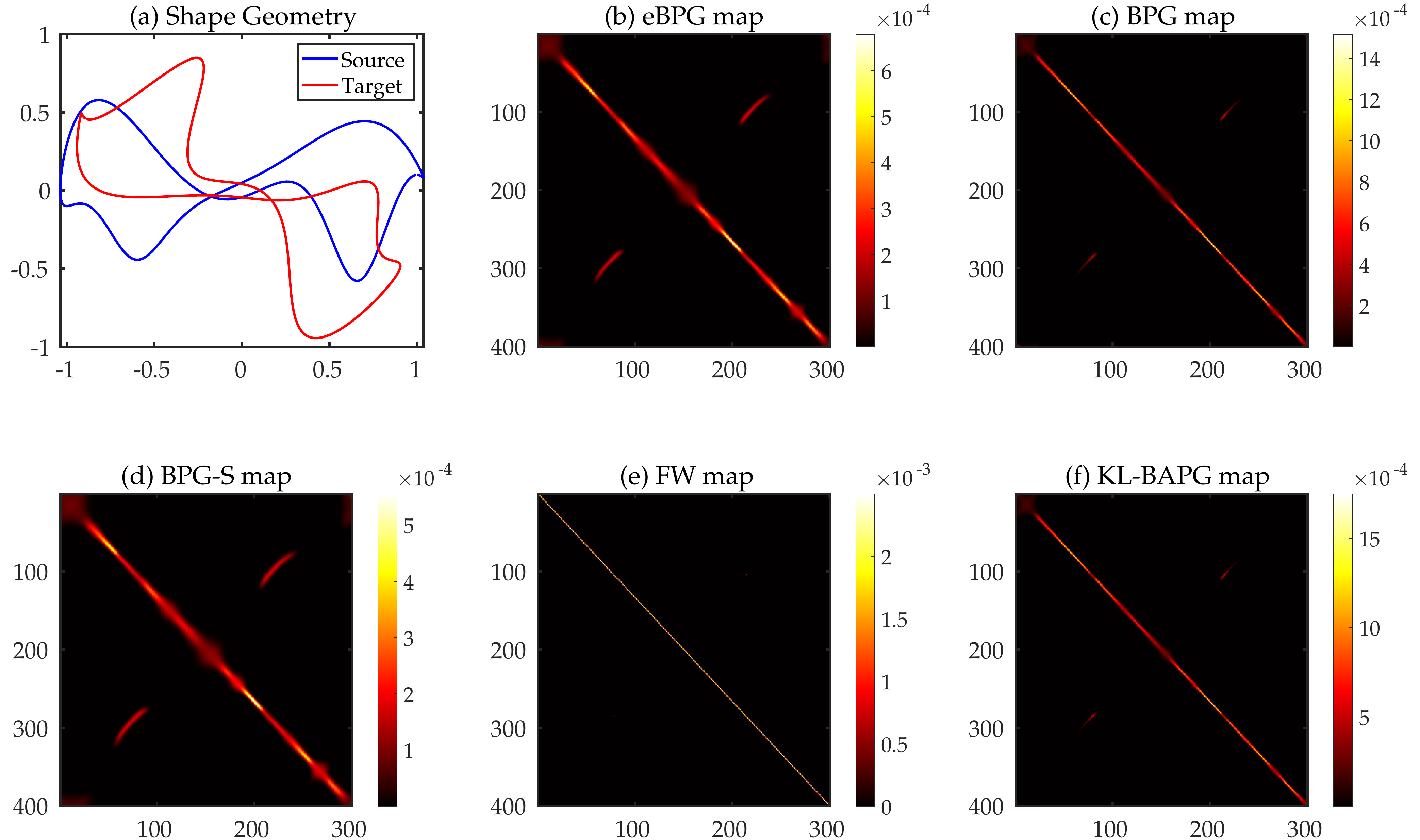}
	\caption{(a): 2D shape geometry of the source and target; (b)-(f): visualization of coupling matrix. }
	\vspace{-3mm}
\label{fig:syn}
\end{figure*}

 Figs.\ref{fig:syn} (b)-(f) provide all color maps of coupling matrices to visualize the matching results. Here, the sparser coupling matrices indicate sharper mapping. All experiment results are consistent with those in Table \ref{tab:algo}. We can observe that both BPG and FW give us satisfactory solution performance as they aim at solving the GW problem exactly. However, FW and BPG will suffer from a significant computation burden which will be further justified in Sec \ref{sec:graph_align} and \ref{sec:graph_partition}. On the other hand, the performance of BPG-S and eBPG is obviously harmed by the inexacness issue. The sharpness  of KL-BAPG's coupling matrix is relatively unaffected by its infeasibility issue too much, although its coupling matrix is denser than BPG and FW ones. 
 As we shall see later,  the effect of the infeasibility issue is minor when the penalty parameter $\rho$ is not too small and will not  even result in  any real cost for graph alignment and partition tasks, which only care about the matching accuracy instead of the sharpness of the coupling.

\subsection{Graph Alignment}\label{sec:graph_align} 
Graph alignment aims to identify the node correspondence between two graphs possibly with different topology structures~\citep{zhang2021balancing,chen2020consistent}. Instead of solving the restricted quadratic assignment problem~\citep{lawler1963quadratic,lacoste2006word}, the GW distance provides an optimal probabilistic correspondence relationship via preservation of the isometric property. 
Here, we compare the proposed KL-BAPG with all existing baselines: FW~\citep{titouan2019optimal}, BPG~\citep{xu2019gromov}, BPG-S~\citep{xu2019gromov} (i.e., the only difference between BPG and BPG-S is that the number of inner iterations for BPG-S is just one), ScalaGW~\citep{xu2019scalable}, SpecGW~\citep{chowdhury2021generalized}, and eBPG~\citep{solomon2016entropic}. Except for BPG and eBPG, others are pure heuristic methods without any theoretical guarantee. Besides the GW-based methods, we also consider three widely used non-GW graph alignment baselines, including IPFP \citep{ipfp}, RRWM \citep{rrwm}, and SpecMethod \citep{sm}.

\paragraph{Parameters Setup} We utilize the unweighted symmetric adjacency matrices as our input distance matrices, i.e., $D_X$ and  $D_Y$. Alternatively,  SpecGW uses the heat kernel $\exp(-L)$ where $L$ is the normalized graph Laplacian matrix. We set both $\mu$ and $\nu$ to be the uniform distribution.  For three heuristic methods --- BPG-S, ScalaGW, and SpecGW, we follow the same setup reported in their papers. As mentioned, eBPG is very sensitive to the entropic regularization parameter. To get comparable results, we report the best result among the set $\{0.1, 0.01, 0.001\}$ of the regularization parameter. For BPG and KL-BAPG, we use the constant step size $5$ and $\rho=0.1$ respectively.  For FW, we use the default implementation in the PythonOT package (see Appendix \ref{sec:exp}). All the experiment results reported here were the average of  5 independent trials over different random seeds and the standard deviation is collected in Appendix \ref{sec:exp}. 

 \begin{table*}[!t]
\caption{Comparison of the matching accuracy (\%) and wall-clock time (seconds) on graph alignment. For KL-BAPG, we also report the time of GPU implementation.}
\centering
\small
\resizebox{\linewidth}{!}{
\begin{tabular}{c|cr|ccr|ccr|ccr}
\toprule

\multirow{2}{*}{Method} &\multicolumn{2}{c|}{Synthetic} & \multicolumn{3}{c|}{Proteins}  & \multicolumn{3}{c|}{Enzymes} & \multicolumn{3}{c}{Reddit} \\
            & Acc & Time & Raw & Noisy & Time & Raw & Noisy & Time & Raw & Noisy & Time  \\\midrule
IPFP & - & - & 43.84 & 29.89 & 87.0 & 40.37 & 27.39 & 23.7 & - & - & -\\
RRWM & - & - & 71.79 & 33.92 & 239.3 & 60.56 & 30.51 & 114.1 & - & - & -\\
SpecMethod   & - & - & 72.40 & 22.92 & 40.5 & 71.43 & 21.39 & 9.6 & - & - & -\\\midrule
FW & 24.50 & 8182 & 29.96 & 20.24 & 54.2 & 32.17 & 22.80 & 10.8 & 21.51 & 17.17 & 1121\\
ScalaGW & 17.93& 12002& 16.37 & 16.05 &372.2  & 12.72 & 11.46 &213.0&  0.54 & 0.70 &1109\\
SpecGW & 13.27&  1462& 78.11 & 19.31 &\textbf{30.7}  & 79.07 & 21.14 & \textbf{6.7}& 50.71 &19.66 &1074\\
eBPG & 34.33&  9502& 67.48 & 45.85 &208.2 & 78.25 & 60.46 &499.7&  3.76 & 3.34 &1234\\
BPG         & 57.56& 22600 & 71.99 & 52.46 &130.4 & 79.19 & 62.32 &73.1& 39.04 &36.68 &1907\\
BPG-S & 61.48& 18587& 71.74 & 52.74 &40.4  & 79.25 & 62.21 &13.4& 39.04 &36.68 &1431\\\midrule
% hBPG        & 51.57 & 13279 & 70.07 & 49.01 & 245.9 & 78.57 & 62.26 & 560.0 & 47.15 & 45.58 & 1447\\
{KL-BAPG}        & \textbf{99.79}&  9024& \textbf{78.18} & \textbf{57.16} &59.1  & \textbf{79.66} & \textbf{62.85} &6.9& \textbf{50.93} &\textbf{49.45} & 780 \\
{KL-BAPG-GPU}    & - & \textbf{1253} & - & - & 75.4 & - & - & 21.8 & - & - &\textbf{115}\\\bottomrule
\end{tabular}
}
\label{tab:align_result} 
\vspace{-6mm}
\end{table*}

\vspace{-3mm}
\paragraph{Database Statistics} We test all methods on both synthetic and real-world databases. Our setup for the synthetic database is the same as in \citep{xu2019gromov}. The source graph $\mathcal G_s = \{\mathcal V_s, \mathcal E_s\}$ is generated by two ideal random models, Gaussian random partition and Barabasi-Albert models,  with different scales, i.e., $|\mathcal V_s| \in \{500, 1000, 1500, 2000, 2500\}$. Then, we generate the target graph $\mathcal G_t=\{\mathcal V_t, \mathcal E_t\}$ by first adding $q\%$ noisy nodes to the source graph, and then generating $q\%$ noisy edges between the nodes in $\mathcal V_t$, i.e., $|\mathcal V_t|=(1+q\%)|\mathcal V_s|, |\mathcal E_t|=(1+q\%)|\mathcal E_s|$, where $q\in \{0, 10, 20, 30, 40, 50\}$. For each setup, we generate five synthetic graph pairs over different random seeds. To sum up, the synthetic database contains 300 different graph pairs. We also validate our proposed methods on three other real-world databases from \citep{chowdhury2021generalized}, including two biological graph databases \textit{Proteins} and \textit{Enzymes}, and a social network database \textit{Reddit}. Furthermore, to demonstrate the robustness of our method regarding the noise level, we follow the noise-generating process (i.e., $q = 10\%$) conducted for the synthesis case to create new databases on top of the three real-world databases. Toward that end, the statistics of all databases used for the graph alignment task have been summarized in Appendix \ref{sec:exp}. We match each node in $\mathcal G_s$ with the most likely node in $\mathcal G_t$ according to the optimized $\pi^\star$. Given the predicted correspondence set $\mathcal S_{\text{pred}}$ and the ground-truth correspondence set $\mathcal S_{\text{gt}}$, we calculate the matching accuracy by $\textnormal{Acc}=|\mathcal S_{\text{gt}} \cap \mathcal S_{\text{pred}}|/|\mathcal S_{\text{gt}}|\times 100\%$.

% Then, we generate the target graph $\mathcal G_t$ by adding more than $q\%$ nodes and edges on top of the source graph, where $q\in \{0, 10, 20, 30, 40, 50\}$.

\paragraph{Results of Our Methods} Table \ref{tab:align_result} shows the
comparison of matching accuracy and wall-clock time on four databases. We observe that {KL-BAPG} works exceptionally well both in terms of computational time and accuracy, 
especially for two large-scale noisy graph databases \textit{Synthetic} and \textit{Reddit}. Notably, {KL-BAPG} is robust enough so that it is not necessary to perform parameter tuning.
As we mentioned in Sec \ref{sec:method}, the effectiveness of GPU acceleration for {KL-BAPG} is also well corroborated on \textit{Synthetic} and \textit{Reddit}. GPU cannot further speed up the training time of \textit{Proteins} and \textit{Reddit} as graphs in these two databases are small-scale. 
Additional experiment results to demonstrate the robustness of {KL-BAPG} and its GPU acceleration will be given in Sec \ref{sec:sentiv}.

% However, GPU cannot further speed up the training time of \textit{Proteins} and \textit{Reddit} as graphs in these two databases are small-scale. 

\paragraph{Comparison with Other Methods}
Traditional non-GW graph alignment methods (IPFP, RRWM, and SpecMethod) have the out-of-memory issue on graphs with more than 500 nodes (e.g., Synthetic and Reddit) and are sensitive to noise. The performance of eBPG  and ScalaGW is influenced by the entropic regularization parameter and approximation error respectively, which accounts for their poor performance. Moreover, it is easy to observe that SpecGW works pretty well on the small dataset but the performance degrades dramatically on the large one, e.g., \textit{synthetic}.  The reason is that SpecGW relies on a linear programming solver as its subroutine, which is not well-suited for large-scale settings. Besides, although ScalaGW has the lowest per-iteration computational complexity, the recursive $K$-partition mechanism developed in \citep{xu2019scalable} is not friendly to parallel computing. Therefore, ScalaGW does not demonstrate attractive performance on multi-core processors. 
% \vspace{-3mm}
\subsection{Graph Partition}\label{sec:graph_partition}
% \vspace{-2mm}
The GW distance can also be potentially applied to the graph partition task. That is, we are trying to match the source graph with a disconnected target graph having $K$ isolated and self-connected super nodes, where $K$ is the number of clusters~\citep{abrishami2020geometry,li2020dirichlet}. Similarly, we compare the proposed KL-BAPG with the other baselines described in Sec \ref{sec:graph_align} on four real-world graph partitioning datasets. Following \citep{chowdhury2021generalized}, we also add three non-GW methods specialized in graph alignment, including FastGreedy \citep{clauset2004finding}, Louvain \citep{blondel2008fast}, and Infomap \citep{rosvall2008maps}.

\vspace{-3mm}
\paragraph{Parameters Setup} For the input distance matrices $D_X$ and $D_Y$, we test our methods on both the adjacency matrices and the heat kernel matrices proposed in \citep{chowdhury2021generalized}. For {KL-BAPG}, we pick the lowest converged function value among $\rho \in\{0.1,0.05,0.01\}$ for adjacency matrices and $\rho \in\{0.001,0.0005,0.0001\}$ for heat kernel matrices. The quality of graph partition results is quantified by computing the adjusted mutual information (AMI) score \citep{vinh2010information} against the ground-truth partition. 
\begin{table*}[]
\centering
\caption{Comparison of AMI scores on graph partition datasets using the adjacency matrices and the heat kernel matrices.}
\resizebox{0.90\linewidth}{!}{
\begin{tabular}{cc|cccccccc}
\toprule
\multirow{2}{*}{Category} & \multirow{2}{*}{Method} & \multicolumn{2}{c}{Wikipedia}   & \multicolumn{2}{c}{EU-email}    & \multicolumn{2}{c}{Amazon}      & \multicolumn{2}{c}{Village}     \\

                           &                          & Raw            & Noisy          & Raw            & Noisy          & Raw            & Noisy          & Raw            & Noisy          \\ \midrule
\multirow{3}{*}{Non-GW}& 
FastGreedy& 0.382&0.341&0.312&0.251&0.637&0.573&\textbf{0.881}&0.778\\
& Louvain   & 0.377 & 0.329 & 0.447 & 0.382 & 0.622 & \textbf{0.584} & \textbf{0.881} & \textbf{0.827}\\
&  Infomap   & 0.332 & 0.329 & 0.374 & 0.379 & \textbf{0.940} & 0.463 & \textbf{0.881} & 0.190\\\midrule
\multirow{5}{*}{Adjacency}  & FW &0.341 & 0.323 & 0.440 & 0.409 & 0.374 & 0.338 & 0.684 & 0.539\\
                           & eBPG                     & 0.461          & \textbf{0.413} & \textbf{0.517} & 0.422 & 0.429          & \textbf{0.387}          & 0.703          & 0.658          \\
                           & BPG                      & 0.367          & 0.333          & 0.478          & 0.414          & 0.412          & 0.368          & 0.642          & 0.575          \\
                           & BPG-S                      & 0.357          & 0.285          & 0.451          & 0.404          & 0.443         & 0.352          & 0.606          & 0.560          \\
    & {KL-BAPG}                     & \textbf{0.469}   & 0.396          & 0.508          & \textbf{0.428}          & \textbf{0.457}          & 0.362  & \textbf{0.736} & \textbf{0.681} \\ \midrule
\multirow{5}{*}{Heat Kernel}  & SpecGW                  & 0.442          &\textbf{0.395}          & 0.487          & 0.425          & 0.565          & 0.487          & 0.758          & 0.707  \\
                           & {eBPG}                     & {0.100}          & {0.082}          & {0.011}          & {0.188}          & {0.604}          & {0.031}          & {0.002}          & {0.003}          \\
                           & BPG                      & 0.418          & 0.373          & 0.473          & 0.253          & 0.492          & 0.436          & 0.705          & 0.619          \\
                            & BPG-S                      & 0.411          & 0.373          & 0.475          & 0.253          & 0.483          & 0.425          & 0.642          & 0.619          \\
    & {KL-BAPG}                     & \textbf{0.533}   & 0.365 & \textbf{0.533} & \textbf{0.436} & \textbf{0.630} & \textbf{0.502} & \textbf{0.797} & \textbf{0.711} \\ \bottomrule
\end{tabular}
}
\vspace{-4mm}
\label{tab:partition_result} 
\end{table*}

\vspace{-3mm}
\paragraph{Results of All Methods} Table \ref{tab:partition_result} shows the comparison of AMI scores among all methods for graph partition. {KL-BAPG} outperforms other GW-based methods in most cases and is more robust under noisy conditions. Specifically, KL-BAPG is consistently better than both FW and SpecGW, which  rely on the Frank-Wolfe method to solve the problem. eBPG has comparable results when using the adjacency matrices, but is sensitive to process the spectral matrices.
The possible reason is that the adjacency matrix and the heat kernel matrix have quite different structures, e.g., the former is sparse while the latter is dense. BPG and BPG-S enjoy similar performances in most cases, but they are not as good as our proposed {KL-BAPG} on all datasets. KL-BAPG also shows competitive performance compared to specialized non-GW graph partition methods. For example, {KL-BAPG}  outperforms Infomap and Louvain in 6 and 4 datasets out of 8, respectively

% Except for BPG and eBPG, others are pure heuristic methods without any theoretical guarantee.

%  Although eBPG has comparable performance when using the Adjacent matrices, it encounters a numerically unstable issue and is unable to produce meaningful results when using the heat kernel matrices.
% Similar to Section \ref{sec:graph_align}, the performance of other baselines degrades under noisy settings. 

% \vspace{-2mm}
\subsection{The Effectiveness and Robustness of KL-BAPG}
\label{sec:sentiv}
% \vspace{-2mm}
At first, we target at demonstrating the robustness of KL-BAPG  on graph alignment, as it is more reasonable to test the robustness of a method on a database (e.g., graph alignment) rather than a single point (e.g., graph partition).

\paragraph{Noise Level and Graph Scale} At the beginning, we present the sensitivity analysis of KL-BAPG with respect to the noise level $q\%$ and the graph scale $|\mathcal V_s|$ in Fig. \ref{fig:noise} using the Synthetic Database. Surprisingly, the solution performance of KL-BAPG is robust to both the noise level and graph scale. In contrast, the accuracy of other methods degrades dramatically as the noise level or the graph scale increases. 
\vspace{-3mm}
\paragraph{Trade-off among Efficiency, Accuracy and Feasibility}
% We give a unified viewpoint of balance among the efficiency, accuracy, and feasibility for all GW-based algorithms on the Reddit database in Fig. \ref{fig:noise} (b). As we can see, the proposed KL-BAPG can successfully sacrifice some feasibility to get better computational efficiency and matching accuracy. 
% The details comparison of four databases is given in Table \ref{tab:mar_compare}. We also conduct an experiment to verify the robustness of KL-BAPG over the step size $\rho$ in Table \ref{tab:mar_compare2}, which corroborates Proposition \ref{prop:bapg_approx} empirically and further guides us on how to choose the step size. The larger $\rho$ means a smaller infeasibility error but a slower convergence speed. 
We present a unified perspective on the trade-off between efficiency, accuracy, and feasibility for all GW-based algorithms on the Reddit database in Fig. \ref{fig:noise} (b). As shown, our proposed KL-BAPG is able to achieve a desirable balance between these three factors.
Table \ref{tab:mar_compare} provides a detailed comparison of the four databases, while Table \ref{tab:mar_compare2} demonstrates the robustness of KL-BAPG with respect to the step size $\rho$. This experiment supports the validity of Proposition \ref{prop:bapg_approx} and provides practical guidance on choosing the optimal step size. Note that a larger $\rho$ leads to a lower infeasibility error but a slower convergence rate.

\begin{table}[t!]
\centering
\label{tab:gpu}
\caption{GPU \& CPU wall-clock time comparison of KL-BAPG, BPG, and eBPG on graph alignment.}
\resizebox{0.8\linewidth}{!}{

\begin{tabular}{c|ccc|ccc}
\toprule
                    & \multicolumn{3}{c|}{Reddit Dataset} & \multicolumn{3}{c}{Synthetic Dataset} \\
                    & KL-BAPG            & BPG    & eBPG   & KL-BAPG             & BPG      & eBPG    \\\midrule
CPU Time(s) & 780              & 1907   & 1234   & 9024             & 22600    & 9502    \\
GPU  Time(s) & 115              & 1013   & 2274   & 1253             & 4458     & 2709    \\
Acceleration Ratio  & \textbf{6.78}    & 1.88   & 0.54   & \textbf{7.20}    & 5.07     & 3.51   \\\bottomrule
\end{tabular}
}
\label{tab:gpu}
\end{table}

\begin{figure*}[t!]
	\vspace{-4mm}
	\centering
	\subfigure[]{\includegraphics[width=0.6\textwidth]{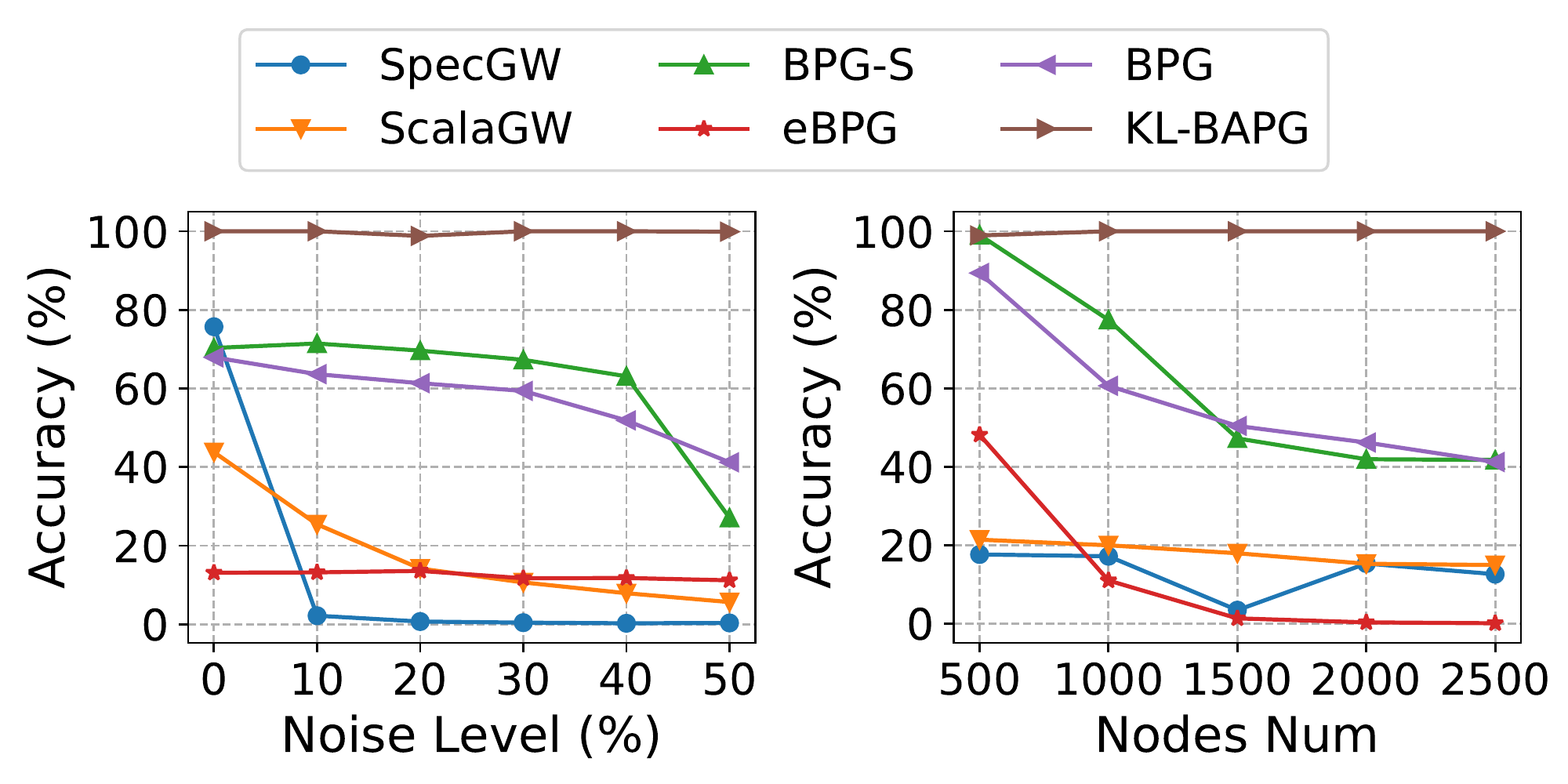}} 
	\subfigure[]{\includegraphics[width=0.39\textwidth]{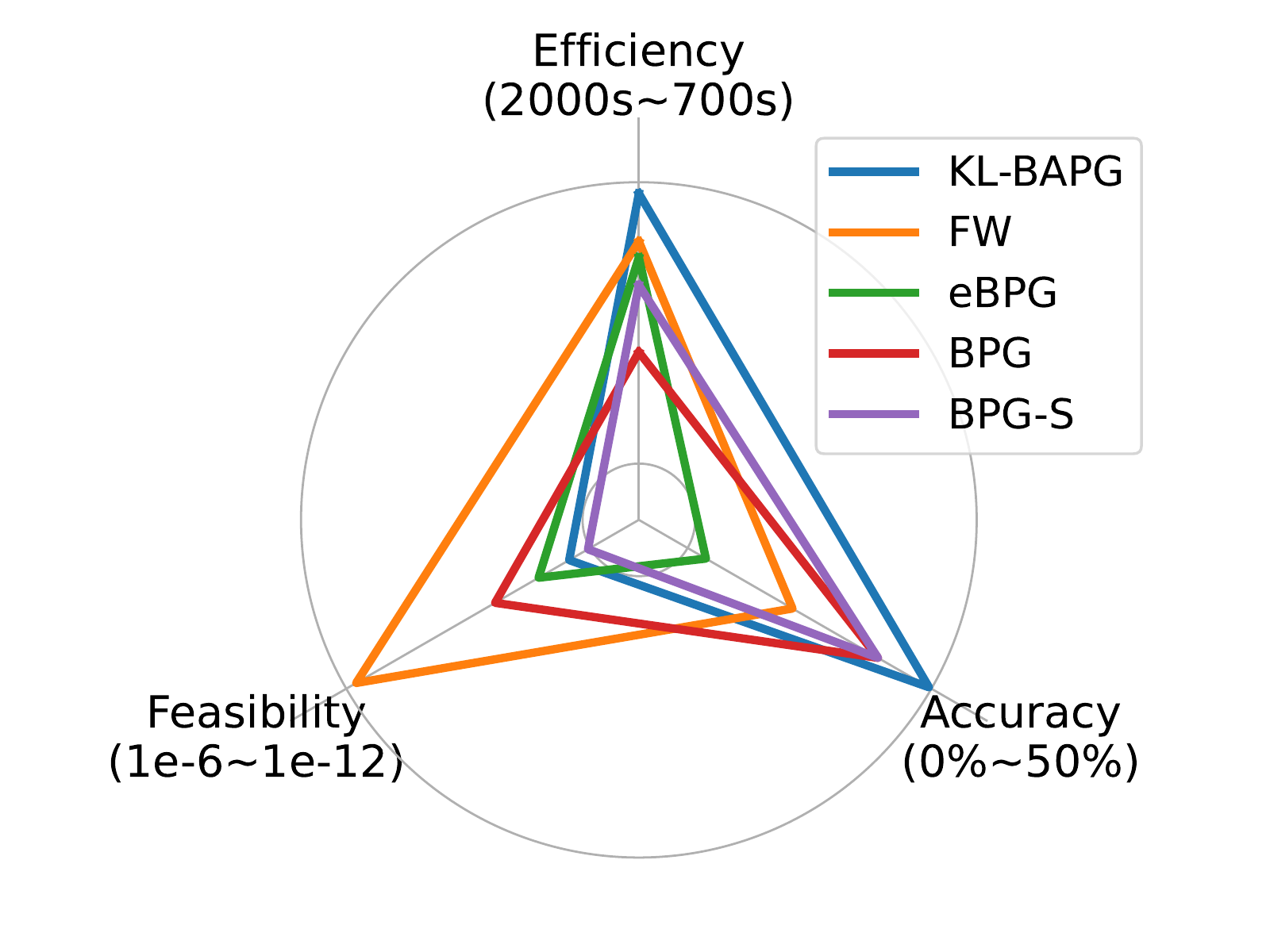}} 
\caption{(a) Sensitivity of the noise level and graph scale on the synthetic graph alignment database. (b) Visualization of trade-off among efficiency, accuracy and feasibility on the Reddit database. The infeasibility error is computed by  $\|\pi^T\mathbf{1}_n -\nu\|+\|\pi \mathbf{1}_m -\mu\|$. "Closer to the boundary of the outer cycle generally indicates higher accuracy, faster speed, and lower infeasibility error.}\label{fig:noise}
\end{figure*}
\paragraph{GPU Acceleration of KL-BAPG.} We conduct experiment results to further justify that KL-BAPG is GPU-friendly. In Table \ref{tab:gpu}, We compare the acceleration ratio (i.e., CPU wall-clock time divided by GPU wall-clock time) of KL-BAPG, eBPG, and BPG on two large-scale graph alignment datasets using the same computing server. For eBPG, we use the official implementation in the PythonOT package, which supports running on GPU. For BPG, we implement the GPU version by ourselves using Pytorch. We can find that KL-BAPG has a much higher acceleration ratio on the GPU compared to BPG and eBPG.
\vspace{-4mm}
\section{Closing Remark}
\vspace{-3mm}
% In this paper, we have carried out a comprehensive study on the development of a provable single-loop algorithm for the relaxation of GW distance computation. 
% By exploiting the error bound
% condition (i.e., never investigated in the GW literature before), we are able to conduct the convergence analysis of BAPG. 
% % For a broader viewpoint, our work also takes the first step and provides a rather new path to study the alternating projected descent method for
% % general nonconvex problems. To the best of our knowledge, it is still an open question even in the optimization community. 
% However, the proposed BAPG still faces the challenge of cubic per-iteration computational complexity, which limits its application to large-scale real-world problems. A possible future direction is to incorporate sparse and low-rank structures in the matching matrix to reduce the per-iteration cost and enhance the performance of the algorithm. Additionally, we have assumed that $D_X$ and $D_Y$ are symmetric distance matrices, but the proposed BAPG can be applied to non-symmetric cases as well since the Luo-tseng error bound condition still holds.
% On another side, Our work provides a new avenue for the study of the alternating projected descent method for general non-convex problems, which is still an open question in the optimization community.
In this study, we have explored the development of a single-loop algorithm for the relaxation of GW distance computation. By utilizing an error bound condition that was not previously investigated in the GW literature, we have successfully conducted the convergence analysis of BAPG. However, the proposed algorithm still faces the challenge of cubic per-iteration computational complexity, which limits its applicability to large-scale real-world problems. A potential future direction is to incorporate sparse and low-rank structures in the matching matrix to reduce the per-iteration cost and improve the performance of the algorithm. Additionally, our method can also be applied to non-symmetric distance matrices, as the Luo-tseng error bound condition remains valid. On another note, our work also provides a new perspective for the study of the alternating projected descent method for general non-convex problems, which is an open area of research in the optimization community.
\section*{Acknowledgement}
 Material in this paper is based upon work supported by the Air Force Office
of Scientific Research under award number FA9550-20-1-0397. Additional support is gratefully
acknowledged from NSF grants 1915967 and 2118199. Jianheng Tang and Jia Li are supported by NSFC Grant 62206067 and Guangzhou-HKUST(GZ) Joint Funding Scheme. Anthony Man-Cho So is supported by in part by the Hong Kong RGC GRF project CUHK 14203920.
% \newpage 
\bibliography{iclr2023_conference}
\bibliographystyle{iclr2023_conference}

\newpage
\appendix
% \section{Appendix}
Overall, the norm $\| \cdot \|$ means $\ell_2$ norm. Both $\dist2$ and $\proj$ are defined w.r.t. $\ell_2$ norm, i.e., for any convex set $C$ and a point $x$, 
	\begin{align}
	    \dist2(x, C) &= \min_{y \in C} \|x - y\|_2,\\
	    \proj_C(x) &= \arg\min_{y \in C} \|x - y\|_2.
	\end{align}
	\section{Problem Properties of \eqref{eq:gw_qua}}
In this paper, we consider the discrete case, where only $m,n$ samples are drawn from real distributions, i.e., $\mu = \sum_{i=1}^n \mu_i \delta_{x_i}$ and  $\nu = \sum_{j=1}^m \nu_j \delta_{y_j}$. The Gromov-Wasserstein distance between $\mu$ and $\nu$ can be recast into the following minimization problem:
\begin{align*}
& \min_{ \pi \in \Pi(\mu,\nu)} \iint |d_X(x,x')-d_Y(y,y')|^2 d\pi(x,y)d\pi(x',y')\\
\Rightarrow & \min_{ \pi \in \Pi(\mu,\nu)} \sum_{i=1}^n \sum_{j=1}^n \sum_{k=1}^m \sum_{l=1}^m |d_X(x_i,x_j)-d_Y(y_k,y_l)|^2 \pi_{ik}\pi_{jl},
\end{align*}
where $\Pi(\mu,\nu)$ is a generalized Birkhoff polytope, i.e., $\Pi(\mu,\nu) = \{\pi \in  \mathbb{R}_+^{n\times m}| \pi \mathbf{1}_m = \mu,\pi^T\mathbf{1} = \nu \}$.  To further simplify the problem and detect the hidden structures, we reformulate it as a compact form, 
\begin{align*}
& \sum_{i=1}^n \sum_{j=1}^n \sum_{k=1}^m \sum_{l=1}^m (D_X(i,j) - D_Y(k,l))^2 \pi_{ik}\pi_{jl}\\
 =& \sum_{i=1}^n \sum_{j=1}^n \sum_{k=1}^m \sum_{l=1}^m (D_X^2(i,j)+D_Y^2(k,l))\pi_{ik}\pi_{jl} -2\sum_{i=1}^n \sum_{j=1}^n \sum_{k=1}^m \sum_{l=1}^m D_X(i,j)D_Y(k,l)\pi_{ik}\pi_{jl} \\
= & \underbrace{\sum_{i=1}^n \sum_{j=1}^n  D_X^2(i,j) \mu_{i}\mu_{j}+ \sum_{k=1}^m \sum_{l=1}^m  D_Y^2(k,l)\nu_{k}\nu_l}_{\text{constant}} - 2 \text{Tr}(D_X\pi D_Y\pi^T).
\end{align*}
where $D_X(i,j) = d_X(x_i,x_j)$ and $D_Y(k,l) =d_Y(y_k,y_l)$ for $i,j \in [n]$ and $k,l\in[m]$. Thus, \eqref{eq:gw_qua} is obtained. 
\paragraph{Proof of Proposition \ref{prop:erb} --- Luo-Tseng Error Bound Condition for the GW Problem \eqref{eq:gw_qua}}
\begin{proof}
$D_X$ and $D_Y$ are two symmetric matrices and $C_1\cap C_2$ is a convex polyhedral set. By invoking Theorem 2.3 in [24],  the Luo-Tseng local error bound condition \eqref{eq:erb} holds only for the feasible set $C_1 \cap C_2$. That is, 
\begin{equation}
\label{eq:erb_a}
\dist2(\pi,\mathcal{X}) \leq \tau \left\|\pi - \proj_{C_1\cap C_2}(\pi+D_X \pi D_Y)\right\|,
\end{equation}
where $\pi \in C_1\cap C_2$.
Then, we aim at extending \eqref{eq:erb_a} to the whole space. Define $\tilde{\pi} = \proj_{C_1 \cap C_2} (\pi)$, we have 
\begin{equation}\label{ineq:1}
    \dist2(\pi,\mathcal{X}) \leq  \|\pi - \tilde{\pi}\| + d(\tilde{\pi}, \mathcal{X}) \leq  \|\pi - \tilde{\pi}\| + \tau \|\tilde{\pi} -\proj_{C1\cap C_2}(\tilde{\pi}+D_X \tilde{\pi} D_Y)\|
\end{equation}
where the first inequality holds because of the triangle inequality, and the second one holds because of the fact $\tilde{\pi} \in C_1 \cap C_2$ and \eqref{eq:erb_a}. Applying the triangle inequality again, we have
\begin{equation}\label{ineq:2}
\begin{aligned}
    &\|\tilde{\pi} -\proj_{C1\cap C_2}(\tilde{\pi}+D_X \tilde{\pi} D_Y)\| \\ \leq & \|\tilde{\pi} -\proj_{C1\cap C_2}(\pi+D_X \pi D_Y)\| +  \|\proj_{C1\cap C_2}(\pi+D_X \pi D_Y) - \proj_{C1\cap C_2}(\tilde{\pi}+D_X \tilde{\pi} D_Y)\|
    \end{aligned}
\end{equation}
Since $\tilde{\pi} = \proj_{C_1 \cap C_2} (\pi)$ and  the projection operator onto a convex set $\proj_{C_1 \cap C_2}(\cdot)$ is non-expensive, i.e., $\|\proj_{C_1 \cap C_2}(x) - \proj_{C_1 \cap C_2}(y)\| \leq \|x - y\|$ holds for any $x$ and $y$, we have 
\begin{equation}\label{ineq:3}
\begin{aligned}
     \|\tilde{\pi} -\proj_{C1\cap C_2}(\pi+D_X \pi D_Y)\| & =  \|\proj_{C_1 \cap C_2} (\pi) -\proj_{C1\cap C_2}( \proj_{C1\cap C_2}(\pi+D_X \pi D_Y))\| \\
     &\leq \|\pi -\proj_{C1\cap C_2}(\pi+D_X \pi D_Y)\|,
    \end{aligned}
\end{equation}
and 
\begin{equation}\label{ineq:4}
\begin{aligned}
    \|\proj_{C1\cap C_2}(\pi+D_X \pi D_Y) - \proj_{C1\cap C_2}(\tilde{\pi}+D_X \tilde{\pi} D_Y)\| & \leq \| (\pi+D_X \pi D_Y) -  (\tilde{\pi}+D_X \tilde{\pi} D_Y)\| \\ & \leq (\sigma_{\text{max}}(D_X)\sigma_{\text{max}}(D_Y)+1)\|\pi- \tilde{\pi}\|
         \end{aligned}
\end{equation}
where $\sigma_{\text{max}}(D_X)$ and $\sigma_{\text{max}}(D_Y)$ denote the maximum singular value of $D_X$ and $D_Y$, respectively. By substituting \eqref{ineq:2} and \eqref{ineq:4} into \eqref{ineq:1}, we get
\begin{equation}\label{ineq:5}
\begin{aligned}
    \dist2(\pi,\mathcal{X}) & \leq
    ( \tau\sigma_{\text{max}}(D_X)\sigma_{\text{max}}(D_Y)+\tau+1)\|\pi - \tilde{\pi}\|+ \tau\| \pi -\proj_{C_1\cap C_2}(\pi+D_X \pi D_Y) \| \\
    & \leq ( \tau\sigma_{\text{max}}(D_X)\sigma_{\text{max}}(D_Y)+2 \tau+1) \| \pi -\proj_{C_1\cap C_2}(\pi+D_X \pi D_Y) \|
    \end{aligned}
\end{equation}
where the last inequality holds because of the fact that
\begin{align*}
    \|\pi - \tilde{\pi}\| = \|\pi - \proj_{C_1\cap C_2}(\pi)\| = \min_{x \in C_1\cap C_2} \|\pi - x\| \leq \| \pi -\proj_{C_1\cap C_2}(\pi+D_X \pi D_Y) \|.
\end{align*}
By letting $\tau = \tau\sigma_{\text{max}}(D_X)\sigma_{\text{max}}(D_Y)+2\tau+1$, we get the desired result. 
\end{proof}
\section{Convergence Analysis of BAPG}
\begin{assumption}
The critical point set $\mathcal{X}$ is non-empty.
\end{assumption}
\begin{definition}[Bregman Divergence]
We define the proximity measure $D_{h}:\textnormal{dom}(h) \times \textnormal{int}( \operatorname{dom} (h)) \rightarrow \mathbb{R}_{+}$
$$
D_{h}(x, y):=h(x)-h(y)-\langle\nabla h(y), x-y\rangle
$$
The proximity measure $D_{h}$ is the so-called Bregman Distance. It measures the proximity between $x$ and $y$. Indeed, thanks to the gradient inequality, one has
$h$ is convex if and only if $D_{h}(x, y) \geq 0, \forall x \in \operatorname{dom} h, y \in \operatorname{int} \operatorname{dom} h$.
\end{definition}
\textbf{Concrete Examples}
\begin{itemize}
    \item  Relative entropy, or Kullback–Leibler divergence, $h(x) = \sum_ix_i\log x_i$ and the Bregman divergence is given as 
    \[
    D_h(x,y) = \sum_{i}x_i \log\frac{x_i}{y_i}.
    \]
    \item Quadratic function: $h(x) = \frac{1}{2}\|x\|^2$. As such, the Bregman divergence $D_h(x,y)$ will reduce to the Euclidean distance. 
\end{itemize}
For the completeness of our algorithmic development, we also give the details how the general BAPG \eqref{eq:bapg_general} update implies \eqref{eq:bapg_update_c} if we choose the Legendre function as relative entropy. 
\begin{equation*}
\begin{aligned}
\pi^{k+1} & = \mathop{\arg\min}_{\pi \in C_1}\left\{ -\langle \pi, D_Y w^{k}D_Y\rangle + \rho D_h(\pi, w^k)\right\} \\
& =\mathop{\arg\min}_{\pi \in C_1}D_h\left(\pi, w^k\odot \exp\left(\frac{D_X w^{k}D_Y}{\rho}\right)\right), \\
& = \text{diag}(\mu./P^k\mathbf{1}_m)P^k,
\end{aligned}
\end{equation*}
where $P^k = w^k\odot \exp\left(\frac{D_X w^{k}D_Y}{\rho}\right)$.
\begin{equation*}
\begin{aligned}
	w^{k+1} & = \mathop{\arg\min}_{w\in C_2}\left\{ -\langle w,  D_X \pi^{k+1}D_Y\rangle + \rho D_h(w,\pi^{k+1})\right\} \\
	& = \mathop{\arg\min}_{w \in C_2}D_h\left(w, \pi^{k+1} \odot \exp\left(\frac{D_X \pi^{k+1}D_Y}{\rho}\right)\right), \\
	&  = {Z^k} \text{diag}(\nu./{Z^k}^T\mathbf{1}_n),
\end{aligned}
\end{equation*}

where $Z^k = \pi^{k+1}\odot \exp\left(\frac{D_X \pi^{k+1}D_Y}{\rho}\right)$.
The optimality conditions for each iteration follow, 
\begin{equation}
\label{eq:bapg_update}
    \left\{ 
    \begin{aligned}
       &  0 \in - D_X w^{k}D_Y + \rho (\nabla h(\pi^{k+1})-\nabla h(w^k)) + \mathcal{N}_{C_1}(\pi^{k+1}), 
        \\
        & 0 \in - D_X \pi^{k+1}D_Y + \rho (\nabla h(w^{k+1})-\nabla h(\pi^{k+1})) + \mathcal{N}_{C_2}(w^{k+1}). 
    \end{aligned}
    \right. 
\end{equation}
\paragraph{Proof of Lemma \ref{lem:lemma_proj}}
\begin{proof}
To prove 
\begin{equation}
\label{eq:lemma_proj}
\begin{aligned}
&\left\|\proj_{C_{1}}(x)+\proj_{C_{2}}(y)- 2 \proj_{C_1\cap C_2}\left(\frac{x+y}{2}\right)\right\| \leqslant M\left\|\proj_{C_{1}}(x)-\proj_{C_{2}}(y)\right\|,
\end{aligned}
\end{equation}
we first convert the left-hand side of \eqref{eq:lemma_proj} to
 \begin{align*}
    \ & \left\|\proj_{C_{1}}(x)-\proj_{C_1\cap C_2}\left(\frac{x+y}{2}\right) +\proj_{C_{2}}(y)-  \proj_{C_1\cap C_2}\left(\frac{x+y}{2}\right)\right\| \\
    \leq \ & \sqrt{2}\left\| \left(\proj_{C_1\cap C_2}\left(\frac{x+y}{2}\right),\proj_{C_1\cap C_2}\left(\frac{x+y}{2}\right)\right)-(\proj_{C_{1}}(x),\proj_{C_{2}}(y)) \right\|. 
\end{align*}
Then, the inequality can be regarded as the stability of the optimal solution for a linear-quadratic problem, i.e., 
\begin{equation}
\label{eq:proj_cons}
(p(r),q(r)) = 
\begin{aligned}
 & \mathop{\arg\min}_{p,q} \frac{1}{2}\|x-p\|^2 +\frac{1}{2}\|y-q\|^2\\
 & \,\, \quad \text{s.t.} \, \, \quad  p-q= r, \\
 & \quad \quad \quad \quad  p \in C_1, q\in C_2. 
\end{aligned}
\end{equation}
When $r = 0$, the pair $(p(0),q(0))$ satisfies $p(0)=q(0)=\proj_{C_1\cap C_2}\left(\frac{x+y}{2}\right)$. Moreover, the parameter $r$ itself can be viewed as the perturbation quantity, which is indeed the right-hand side of \eqref{eq:lemma_proj}. By invoking Theorem 4.1 in \citep{zhang2020global}, we can bound the distance between two optimal solutions by the perturbation quantity $r$, i.e., 
\begin{align*}
     \|(p(0),q(0))-(p(r),q(r))\| \leq M \|r\|.
\end{align*}

\begin{remark}
For any vector $r$, let $(p(r), q(r))$ be the optimal solution to \eqref{eq:proj_cons}. On the one hand, choosing $r=0$, it is easy to see that $p(0)= q(0) = \proj_{C_1 \cap C_2} (\frac{x+y}{2})$. On the other hand, by choosing $r = \proj_{C_1}(x) - \proj_{C_2}(y)$, it is easy to see that $(p(r), q(r)) = (\proj_{C_1}(x), \proj_{C_2}(y))$. Since Theorem 4.1 in \citep{zhang2020global} implies that for any $r$, we have
	    \begin{align*}
	        \|(p(0), q(0)) - (p(r), q(r))\|_2 \leq M \|r\|_2
	    \end{align*}
	    where $M$ only depends on $C_1$ and $C_2$. By choosing $r = \proj_{C_1}(x) - \proj_{C_2}(y)$, we have
	    \begin{align*}
	        \|(\proj_{C_1 \cap C_2} (\frac{x+y}{2}), \proj_{C_1 \cap C_2} (\frac{x+y}{2})) - (\proj_{C_1}(x), \proj_{C_2}(y)) \leq M \|\proj_{C_1}(x) - \proj_{C_2}(y)\|_2.
	    \end{align*}
	\end{remark}
\end{proof}
\paragraph{Proof of Proposition \ref{prop:bapg_approx} --- Approximation bound of fixed points of BAPG}
\begin{definition}[\textbf{Bounded Linear Regularity} (cf. Definition 5.6 in~\cite{bauschke1996projection})]	\label{def:blr}
	Let $C_1,\ldots,C_N$ be closed convex subsets of $\mathbb{R}^n$ with a non-empty intersection $C$. We say that the collection $\{C_1,\ldots,C_N\}$ is \emph{bounded linearly regular (BLR)} if for every bounded subset $\mathbb{B}$ of $\mathbb{R}^n$, there exists a constant $\kappa>0$ such that 
	\[ d(x,C) \leq \kappa \max_{i\in\{1,\ldots, N\}} d(x,C_i), ~\text{for all} ~x\in \mathbb{B}.\]
\end{definition}
If all of $C_i$ are polyhedral sets, the BLR condition will automatically hold. 
\begin{proof}
Recall the fixed-point set of BAPG:
\begin{equation}
\label{eq:x_bapg2}
    \mathcal{X}_{\text{BAPG}} = \left\{(\pi^\star,w^\star):
\begin{aligned}
& \nabla f(w^\star) + \rho (\nabla h(\pi^\star)-\nabla h(w^\star)) + p =0, p \in \mathcal{N}_{C_1}(\pi^\star) \\ 
& \nabla f(\pi^\star) + \rho (\nabla h(w^\star)-\nabla h(\pi^\star)) + q =0, q \in \mathcal{N}_{C_2}(w^\star)
\end{aligned}
\right\}. 
\end{equation} 

Define $\hat{\pi} = \proj_{C_1\cap C_2}(\pi^\star)$, we first want to argue the following inequality holds, 
\begin{align*}
    \|\hat{\pi}-\pi^\star\| + \|\hat{\pi}-w^\star\| \leq (2\kappa+1)\|\pi^\star-w^\star\|. 
\end{align*}
As the BLR condition is satisfied for the polyhedral constraint, see Definition \ref{def:blr} for details, we have 
\begin{equation}
\label{eq:blr_inf}
\begin{aligned}
    \|\hat{\pi}-\pi^\star\| +  \|\hat{\pi}-w^\star\| & \leq 2\|\hat{\pi}-\pi^\star\| + \|\pi^\star-w^\star\|\\
    & = 2\dist2(\pi^\star,C_1\cap C_2)+\|\pi^\star-w^\star\| \\
    & \leq (2\kappa +1)\|\pi^\star-w^\star\|. 
\end{aligned}
\end{equation}
Based on the stationary points defined in \eqref{eq:x_bapg}, we have, 
\[
\nabla f(w^\star)^T(\hat{\pi}-\pi^\star) + \rho (\nabla h(\pi^\star)-\nabla h(w^\star))^T(\hat{\pi}-\pi^\star)+ p^T(\hat{\pi}-\pi^\star) =0, p \in \mathcal{N}_{C_1}(\pi^\star),
\]
\[
\nabla f(\pi^\star)^T(\hat{\pi}-w^\star) + \rho (\nabla h(w^\star)-\nabla h(\pi^\star))^T(\hat{\pi}-w^\star)  + q^T(\hat{\pi}-w^\star)  =0, q \in \mathcal{N}_{C_2}(w^\star).
\]
Summing up the above two equations, 
\begin{align*}
& \nabla f(w^\star)^T(\hat{\pi}-\pi^\star) + \nabla f(\pi^\star)^T(\hat{\pi}-w^\star) + \rho (\nabla h(\pi^\star)-\nabla h(w^\star))^T(w^\star-\pi^\star)+p^T(\hat{\pi}-\pi^\star)+q^T(\hat{\pi}-w^\star) =0 \\
\stackrel{(\clubsuit)}{\Rightarrow} & 
\nabla f(w^\star)^T(\hat{\pi}-\pi^\star) + \nabla f(\pi^\star)^T(\hat{\pi}-w^\star) - \rho(D_h(\pi^\star,w^\star)+D_h(w^\star,\pi^\star))+p^T(\hat{\pi}-\pi^\star)+q^T(\hat{\pi}-w^\star) =0 \\
\stackrel{(\spadesuit)}{\Rightarrow} & \nabla f(w^\star)^T(\hat{\pi}-\pi^\star) + \nabla f(\pi^\star)^T(\hat{\pi}-w^\star) - \rho(D_h(\pi^\star,w^\star)+D_h(w^\star,\pi^\star)) \ge 0,  
\end{align*}
where $(\clubsuit)$ holds as $(\nabla h(x)-\nabla h(y))^T(x-y) = D_h(x,y)+D_h(y,x)$ and $(\spadesuit)$ follows from the property of the normal cone. For instance, since $q \in \mathcal{N}_{C_1}(\pi^\star)$, we have $q^T(\hat{\pi}-\pi^\star)\leq 0$ (i.e., $\hat{\pi} \in C_1$). Therefore, 
\begin{align*}
    \rho(D_h(\pi^\star,w^\star)+D_h(w^\star,\pi^\star)) & \leq  \nabla f(w^\star)^T(\hat{\pi}-\pi^\star) + \nabla f(\pi^\star)^T(\hat{\pi}-w^\star)\\
    & \leq \|\nabla f(w^\star)\|\|\hat{\pi}-\pi^\star\|+\|\nabla f(\pi^\star)\|\|\hat{\pi}-w^\star\| \\
    & \leq L_f(\|\hat{\pi}-\pi^\star\|+\|\|\hat{\pi}-w^\star\|)\\
    & \stackrel{\eqref{eq:blr_inf}}{\leq} (2\kappa+1)L_f\|\pi^\star-w^\star\|.
\end{align*}
 The next to last inequality holds as $f(\cdot)$ is a quadratic function and the effective domain $C_1 \cap C_2$ is bounded. Thus, the norm of its gradient will naturally have a constant upper bound, i.e., $L_f = \sigma_{\text{max}}(D_X)\sigma_{\text{max}}(D_Y)$. As $h$ is a $\sigma$-strongly convex, we have 
\[D_h(\pi^\star,w^\star) \ge \frac{\sigma}{2}\|\pi^\star-w^\star\|^2.\]
Together with this property, we can quantify the infeasibility error $\|\pi^\star-w^\star\|$,
\begin{equation}
    \label{eq:infeas_BAPG}
     \|\pi^\star-w^\star\| \leq \frac{(2\kappa+1)L_f}{\sigma\rho}.
\end{equation}
When $\rho \rightarrow +\infty$, it is easy to observe that the infeasibility error term $\|\pi^\star-w^\star\|$ will shrink to zero. More importantly, if $\pi^\star=w^\star$, then $X_{\text{BAPG}}$ will be identical to $\mathcal{X}$. 
Next, we target at quantifying the approximation gap between the fixed-point set of BAPG and the critical point set of the original problem \eqref{eq:gw_qua}. Upon \eqref{eq:x_bapg2}, we have 
\[
\nabla f \left(\frac{\pi^\star+w^\star}{2}\right)+\frac{p+q}{2} = 0,  
\]
where $\nabla f(\cdot)$ is a linear operator. By applying the Luo-Tseng local error bound condition of \eqref{eq:gw_qua}, i.e., Proposition \ref{prop:erb}, we have
\begin{equation*}
\begin{aligned}
     \dist2\left(\frac{\pi^\star+w^\star}{2}, \mathcal{X}\right)  & \leq \tau \left\|\frac{\pi^\star+w^\star}{2} - \proj_{C_1\cap C_2}\left(\frac{\pi^\star+w^\star}{2}-\nabla f \left(\frac{\pi^\star+w^\star}{2}\right)\right)\right\| \\
     & =  \tau \left\|\frac{\pi^\star+w^\star}{2} - \proj_{C_1\cap C_2}\left(\frac{\pi^\star+p+w^\star+q}{2}\right)\right\| \\
     & \stackrel{(\clubsuit)}{=} \frac{\tau}{2}\left\| \proj_{C_1}(\pi^\star+p) + \proj_{C_1}(w^\star+q) - 2\proj_{C_1\cap C_2}\left(\frac{\pi^\star+p+w^\star+q}{2}\right) \right\| \\
     & \stackrel{(\spadesuit)}{\leq} \frac{M\tau}{2}\|\pi^\star- w^\star\|,
\end{aligned}
\end{equation*}
where $(\clubsuit)$ holds due to the normal cone property, that is, $\proj_{C}(x+z) = x$ for any $x \in C$ and $z \in \mathcal{N}_C(x)$, and $(\spadesuit)$ follows from Lemma \ref{lem:lemma_proj}. Incorporating with \eqref{eq:infeas_BAPG}, the approximation bound for BAPG has been characterized quantitatively, i.e., 
\[
\dist2\left(\frac{\pi^\star+w^\star}{2}, \mathcal{X}\right)\leq  \frac{(2\kappa+1)L_fM}{2\sigma\rho}. 
\]
\end{proof}
\paragraph{Proof of Proposition \ref{prop:basic_bapg} --- Sufficient decrease property of BAPG}
\begin{proof}
We first observe from the optimality conditions of main updates, i.e., 
\begin{align}
\label{eq:bapg_update_opt1}
& 0 \in \nabla_\pi f(\pi^{k},w^{k}) + \rho (\nabla h(\pi^{k+1})-\nabla h(w^{k})) + \partial g_1(\pi^{k+1}) \\
\label{eq:bapg_update_opt2}
&  0 \in \nabla_w f(\pi^{k+1},w^{k})+ \rho (\nabla h(w^{k+1})-\nabla h(\pi^{k+1})) +\partial g_2(w^{k+1})
\end{align}
where $g_1(\pi) = I_{\{\pi \in C_1\}}$ and $g_2(w) = I_{ \{w\in C_2\}}$. 
Due to the convexity of $g_1(\cdot)$, it is natural to obtain, 
\begin{align*}
g_1(\pi^{k})-g_1(\pi^{k+1}) & \ge -\langle \nabla_\pi f(\pi^k, w^{k})
+\rho (\nabla h(\pi^{k+1})-\nabla h(w^{k})), \pi^k-\pi^{k+1}\rangle \\
& = -\langle \nabla_\pi f(\pi^k, w^{k}),\pi^k-\pi^{k+1}\rangle +\langle \rho (\nabla h(\pi^{k+1})-\nabla h(w^{k})), \pi^{k+1}-\pi^{k}\rangle
\end{align*}
As $f(\pi,w) = -\text{tr}(D_X\pi D_Y w^T)$ is a bilinear function, we have,
\[
f(\pi^{k},w^k) - f(\pi^{k+1}, w^k)-\langle \nabla_\pi f(\pi^k, w^{k}),\pi^k-\pi^{k+1}\rangle = 0. 
\]
Consequently, we get 
\begin{equation}
\label{eq:BAPG_up1}
    f(\pi^{k},w^k)+g_1(\pi^{k})-f(\pi^{k+1}, w^k)-g_1(\pi^{k+1}) \ge  \rho \langle \nabla h(\pi^{k+1})-\nabla h(w^{k}), \pi^{k+1}-\pi^{k}\rangle. 
\end{equation}
Similarly, based on the $w$-update, we obtain 
\begin{equation}
\label{eq:BAPG_up2}
    f(\pi^{k+1},w^{k})+g_2(w^{k})-f(\pi^{k+1}, w^{k+1})-g_2(w^{k+1}) \ge  \rho \langle \nabla h(w^{k+1})-\nabla h(\pi^{k+1}), w^{k+1}-w^{k}\rangle. 
\end{equation}
Combine with \eqref{eq:BAPG_up1} and \eqref{eq:BAPG_up2}, we obtain
\begin{equation}
\label{eq:bapg_decrease}
\begin{aligned}
    & f(\pi^{k},w^k)+g_1(\pi^{k})+g_2(w^{k})-f(\pi^{k+1}, w^{k+1})-g_1(\pi^{k+1})-g_2(w^{k+1}) \\
    \geq \, \, &  \rho \langle \nabla h(\pi^{k+1})-\nabla h(w^{k}), \pi^{k+1}-\pi^{k}\rangle + \rho \langle \nabla h(w^{k+1})-\nabla h(\pi^{k+1}), w^{k+1}-w^{k}\rangle. 
\end{aligned}
\end{equation}
The right-hand side can be further simplified, 
\begin{align*}
     & \rho \langle \nabla h(\pi^{k+1})-\nabla h(w^{k}), \pi^{k+1}-\pi^{k}\rangle + \rho \langle \nabla h(w^{k+1})-\nabla h(\pi^{k+1}), w^{k+1}-w^{k}\rangle\\
     \stackrel{(\clubsuit)}{=}& \rho (D_h(\pi^k,\pi^{k+1})+D_h(\pi^{k+1},w^k)-D_h(\pi^k,w^k)) +\rho(D_h(w^k,w^{k+1}) + \\
     & D_h(w^{k+1},\pi^{k+1})-D_h(w^k,\pi^{k+1}))\\
     \stackrel{(\spadesuit)}{=} &  \rho D_h(\pi^k,\pi^{k+1}) + \rho D_h(w^k,w^{k+1}) - \rho D_h(\pi^k,w^k)+ \rho D_h(\pi^{k+1},w^{k+1}).
\end{align*}
Here, $(\clubsuit)$ uses the fact that the three-point property of Bregman divergence holds, i.e., 
 For any $y, z \in$ int $\operatorname{dom} h$ and $x \in \operatorname{dom} h$,
$$
D_{h}(x, z)-D_{h}(x, y)-D_{h}(y, z)=\langle\nabla h(y)-\nabla h(z), x-y\rangle.
$$
% Moreover, $(\spadesuit)$ holds as $D_h(\cdot, \cdot)$ is symmetric on the whole sequence.
% To proceed, this together with \eqref{eq:bapg_decrease} implies
% \begin{equation}
% \label{eq:suff_bapg}
% F_\rho(\pi^{k+1},w^{k+1}) - F_\rho(\pi^k ,w^k) \leq -  \rho D_h(\pi^k,\pi^{k+1}) - \rho D_h(w^k,w^{k+1}). 
% \end{equation}
% % By letting $\kappa = \frac{\rho}{2}$, the sufficient decrease property is obtained. 

% Summing up \eqref{eq:suff_bapg} from $k=0$ to $+\infty$, we obtain 
% \[
% F_\rho(\pi^{\infty},w^\infty)-F_\rho(\pi^0,w^0) \leq -\kappa_1 \sum_{k=0}^\infty \left(D_h(\pi^k,\pi^{k+1}) + D_h(w^k,w^{k+1}) \right). 
% \]
% \textcolor{blue}{As the potential function $F_\rho(\cdot,\cdot)$ is coercive and $\{(\pi^{k},w^k)\}_{k \ge 0}$ is a bounded sequence, it means the left-hand side is bounded, which implies
% \[
% \sum_{k=0}^\infty \left(D_h(\pi^k,\pi^{k+1}) + D_h(w^k,w^{k+1}) \right) < +\infty. 
% \]}

By combining with \eqref{eq:BAPG_up1} and \eqref{eq:BAPG_up2}, we obtain \eqref{eq:bapg_decrease}. The right-hand side of \eqref{eq:bapg_decrease} can be further simplified, 
\begin{align*}
     & \rho \langle \nabla h(\pi^{k+1})-\nabla h(w^{k}), \pi^{k+1}-\pi^{k}\rangle + \rho \langle \nabla h(w^{k+1})-\nabla h(\pi^{k+1}), w^{k+1}-w^{k}\rangle\\
     \stackrel{(\clubsuit)}{=}& \rho (D_h(\pi^k,\pi^{k+1})+D_h(\pi^{k+1},w^k)-D_h(\pi^k,w^k)) +\rho(D_h(w^k,w^{k+1}) + \\
     & D_h(w^{k+1},\pi^{k+1})-D_h(w^k,\pi^{k+1}))
\end{align*}
Here, $(\clubsuit)$ uses the fact that the three-point property of Bregman divergence holds, i.e., 
for any $y, z \in$ int $\operatorname{dom} h$ and $x \in \operatorname{dom} h$,
$$
D_{h}(x, z)-D_{h}(x, y)-D_{h}(y, z)=\langle\nabla h(y)-\nabla h(z), x-y\rangle.
$$
This together with \eqref{eq:bapg_decrease} implies
\begin{equation}
\label{eq:suff_bapg_new}
F_\rho(\pi^{k+1},w^{k+1}) - F_\rho(\pi^k ,w^k) \leq -  \rho D_h(\pi^k,\pi^{k+1}) - \rho D_h(w^k,w^{k+1}) -\rho \left(D_h(\pi^{k+1},w^k) - D_h(w^k, \pi^{k+1})\right) . 
\end{equation}
% By letting $\kappa = \frac{\rho}{2}$, the sufficient decrease property is obtained. 

Summing up \eqref{eq:suff_bapg_new} from $k=0$ to $+\infty$, we obtain 
\[
F_\rho(\pi^{\infty},w^\infty)-F_\rho(\pi^0,w^0) + \rho \sum_{k=0}^\infty \left(D_h(\pi^{k+1},w^k) - D_h(w^k, \pi^{k+1})\right) \leq -\rho \sum_{k=0}^\infty \left(D_h(\pi^k,\pi^{k+1}) + D_h(w^k,w^{k+1}) \right). 
\]

As the potential function $F_\rho(\cdot,\cdot)$ is coercive and $\{(\pi^{k},w^k)\}_{k \ge 0}$ is a bounded sequence, the assumption that $\sum_{k=0}^\infty \left(D_h(\pi^{k+1},w^k) - D_h(w^k, \pi^{k+1})\right)$ is bounded implies that the left-hand side is bounded, which means
\[
\sum_{k=0}^\infty \left(D_h(\pi^k,\pi^{k+1}) + D_h(w^k,w^{k+1}) \right) < +\infty. 
\]

\end{proof}
\paragraph{Proof of Theorem \ref{thm:sub_bapg} --- Subsequence convergence result of BAPG}
% \begin{theorem}[Subsequent Convergence of BAPG]
% Any limit point of the sequence $\{(\pi^{k},w^k)\}_{k \ge 0 }$ is a stationary point of  defined in \eqref{eq:x_bapg}, i.e., $\mathcal{X}_{\text{BAPG}}$. 
% \end{theorem}
\begin{proof}
 Let $(\pi^\infty, w^\infty)$ be a limit point of the sequence $\{(\pi^{k},w^{k})\}_{k \ge 0 }$. Then, there exists a sequence $\{n_k\}_{k\ge0}$ such that $\{(\pi^{n_k},w^{n_k})\}_{k \ge 0 }$ converges to $(\pi^\infty, w^\infty)$. Replacing $k$ by $n_k$ in \eqref{eq:bapg_update_opt1} and \eqref{eq:bapg_update_opt2}, 
 taking limits on both sides as $k\rightarrow\infty$ 
\begin{align*}
& 0 \in \nabla_\pi f(\pi^{\infty},w^{\infty}) + \rho (\nabla h(\pi^{\infty})-\nabla h(w^{\infty})) + \partial g_1(\pi^{\infty}) \\
&  0 \in \nabla_w f(\pi^{\infty},w^{\infty})+ \rho (\nabla h(w^{\infty})-\nabla h(\pi^{\infty})) +\partial g_2(w^{\infty}).
\end{align*}
Based on the fact that $\nabla_\pi f(\pi^{\infty},w^{\infty}) = \nabla f(w^\infty)$ and $\nabla_w f(\pi^{\infty},w^{\infty}) = \nabla f(\pi^\infty)$, it can be easily concluded that $(\pi^\infty, w^\infty) \in \mathcal{X}_{\text{BAPG}}$.
\end{proof}

More importantly, the above subsequence convergence result can be extended to the global convergence if the Legendre function $h(\cdot)$ is quadratic.  
\begin{theorem}[Global Convergence of BAPG --- Quadratic Case]
The sequence $\{(\pi^{k},w^k)\}_{k \ge 0 }$ converges to a critical point of $F_\rho(\pi, w)$. 
\end{theorem}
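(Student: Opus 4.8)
The plan is to observe that, when $h$ is quadratic, BAPG is exactly a Gauss--Seidel (alternating) proximal minimization scheme on the potential $F_\rho$, and then to run the Kurdyka--\L{}ojasiewicz (KL) convergence machinery of \citep{attouch2010proximal,attouch2013convergence}. Write $z^k=(\pi^k,w^k)$. The first point is that with $h(x)=\tfrac12\|x\|^2$ the Bregman divergence $D_h(x,y)=\tfrac12\|x-y\|^2$ is symmetric, so the asymmetric error $D_h(\pi^{k+1},w^k)-D_h(w^k,\pi^{k+1})$ vanishes identically and the boundedness hypothesis of Proposition~\ref{prop:basic_bapg} holds automatically. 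This yields the unconditional sufficient-decrease estimate
\[
F_\rho(z^{k+1})\;\le\;F_\rho(z^k)-\tfrac{\rho}{2}\bigl(\|\pi^{k+1}-\pi^k\|^2+\|w^{k+1}-w^k\|^2\bigr)=F_\rho(z^k)-\tfrac{\rho}{2}\|z^{k+1}-z^k\|^2 .
\]
Since $F_\rho$ is coercive (its effective domain $C_1\times C_2$ is bounded) and lower semicontinuous, $\{F_\rho(z^k)\}$ is nonincreasing and convergent, $\{z^k\}$ is bounded, and summing the displayed inequality gives $\sum_k\|z^{k+1}-z^k\|^2<\infty$.

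Second, I would prove the relative-error (subgradient) bound: there is a constant $b>0$ and vectors $v^{k+1}\in\partial F_\rho(z^{k+1})$ with $\|v^{k+1}\|\le b\,\|z^{k+1}-z^k\|$. Starting from the optimality conditions \eqref{eq:bapg_update_opt1}--\eqref{eq:bapg_update_opt2}, the $w$-block already uses $\pi^{k+1}$ and $w^{k+1}$, so (since $\nabla_w f$ does not depend on $w$) one gets $0\in\partial_w F_\rho(z^{k+1})$ exactly; for the $\pi$-block the natural candidate is
\[
v^{k+1}_\pi=\nabla_\pi f(\pi^{k+1},w^{k+1})-\nabla_\pi f(\pi^{k},w^{k})+\rho\bigl(\nabla h(w^k)-\nabla h(w^{k+1})\bigr)\in\partial_\pi F_\rho(z^{k+1}),
\]
which by bilinearity of $f$ and affinity of $\nabla h$ equals $-D_X(w^{k+1}-w^k)D_Y+\rho(w^k-w^{k+1})$ and hence satisfies $\|v^{k+1}_\pi\|\le(\sigma_{\textnormal{max}}(D_X)\sigma_{\textnormal{max}}(D_Y)+\rho)\|w^{k+1}-w^k\|\le b\|z^{k+1}-z^k\|$.

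Third, I would verify that $F_\rho$ is a KL function. This is immediate: $f$ is a polynomial, $\rho D_h$ is a quadratic, and $g_1,g_2$ are indicators of polyhedral sets, so $F_\rho$ is semialgebraic and therefore satisfies the KL inequality at every point of its domain. Combining the sufficient-decrease estimate, the relative-error bound, the KL property, and boundedness of $\{z^k\}$, the abstract convergence theorem of \citep{attouch2010proximal,attouch2013convergence} applies and gives the finite-length conclusion $\sum_k\|z^{k+1}-z^k\|<\infty$; hence $\{z^k\}$ is Cauchy and converges to a single point $z^\star=(\pi^\star,w^\star)$. Finally, since $v^{k+1}\in\partial F_\rho(z^{k+1})$ with $\|v^{k+1}\|\to0$, $z^{k+1}\to z^\star$, and $\partial F_\rho$ has closed graph, we conclude $0\in\partial F_\rho(z^\star)$, i.e.\ $z^\star$ is a critical point of $F_\rho$ (equivalently $z^\star\in\mathcal{X}_{\text{BAPG}}$ in the sense of \eqref{eq:x_bapg}), as in the proof of Theorem~\ref{thm:sub_bapg}.

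The step requiring the most care is the relative-error bound: because BAPG is an \emph{alternating} scheme, the gradient of $f$ and the Bregman term in the $\pi$-update are evaluated at the stale iterate $w^k$ rather than $w^{k+1}$, so assembling a genuine element of the full limiting subdifferential $\partial F_\rho(z^{k+1})$ forces one to add and subtract these cross terms and absorb the discrepancy into $\|z^{k+1}-z^k\|$; this is precisely where the quadraticity of $h$ (making $D_h$ symmetric and $\nabla h$ affine) and the bilinearity of $f$ (making $\nabla f$ affine, hence globally Lipschitz) are used, and it is also what fails for a general Legendre $h$, explaining the restriction to the quadratic case. Everything else --- coercivity, semialgebraicity, and the abstract KL descent argument --- is standard.
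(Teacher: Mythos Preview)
Your proposal is correct and follows essentially the same route as the paper's proof: invoke Proposition~\ref{prop:basic_bapg} for sufficient decrease (noting that the asymmetric Bregman error vanishes when $h$ is quadratic), establish the relative-error/safeguard bound $\dist2(0,\partial F_\rho(z^{k+1}))\le\kappa_2\|z^{k+1}-z^k\|$ from the optimality conditions \eqref{eq:bapg_update_opt1}--\eqref{eq:bapg_update_opt2}, and then appeal to the KL framework of \citep{attouch2010proximal,attouch2013convergence}. Your treatment is in fact slightly more explicit than the paper's --- you spell out the semialgebraicity of $F_\rho$ to justify the KL property and you correctly observe that the $w$-block contributes $0$ to the subgradient (the paper retains a vanishing term $\nabla_w f(\pi^{k+1},w^k)-\nabla_w f(\pi^{k+1},w^{k+1})$) --- but the argument and the resulting constant $(\sigma_{\max}(D_X)\sigma_{\max}(D_Y)+\rho)$ match.
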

\begin{proof}
To invoke the Kurdyka-Lojasiewicz analysis framework 
developed in \citep{attouch2010proximal,attouch2013convergence}, we have to establish two crucial properties --- sufficient decrease and safeguard condition. The first one has already been proven in Proposition \ref{prop:basic_bapg}. Then, we would like to show that there exists a constant $\kappa_2$ such that, 
\[
\dist2(0,   \partial F_\rho(\pi^{k+1},w^{k+1})) \leq \kappa_2 (\|\pi^{k+1}-\pi^k\|+\|w^{k+1}-w^k\|).
\]

At first, noting that 
\begin{equation*}
    \partial F_\rho(\pi^{k+1},w^{k+1}) = \left[
    \begin{aligned}
      & \nabla_\pi f(\pi^{k+1},w^{k+1}) + \rho (\pi^{k+1}-w^{k+1}) + \partial g_1(\pi^{k+1}) \\
      & \nabla_w f(\pi^{k+1},w^{k+1}) + \rho (w^{k+1}-\pi^{k+1}) + \partial g_2(w^{k+1})
    \end{aligned}
    \right].
\end{equation*}
Again, together with the updates \eqref{eq:bapg_update_opt1} and \eqref{eq:bapg_update_opt2}, this implies
 \begin{equation*}
 \begin{aligned}
 \dist2^2(\mathbf{0},\partial F_\rho(\pi^{k+1},w^{k+1})) & \leq \|\nabla_\pi f(\pi^{k+1},w^{k+1})-\nabla_\pi f(\pi^{k},w^{k})\|^2 + \rho \| w^k- w^{k+1}\|^2 \\
 & + \|\nabla_w f(\pi^{k+1},w^{k})-\nabla_w f(\pi^{k+1},w^{k+1})\|^2 \\
 & \leq (\sigma^2_{\text{max}}(D_X)\sigma^2_{\text{max}}(D_Y)+\rho) \| w^k- w^{k+1}\|^2, 
 \end{aligned}
 \end{equation*}
 where $\sigma_{\text{max}}(\cdot)$ denotes the maximum singular value. 
 By letting $\kappa_2 = \sqrt{\sigma^2_{\text{max}}(D_X)\sigma^2_{\text{max}}(D_Y)+\rho}$, we get the desired result.
\end{proof}

\newpage 
\section{Additional Experiment Details}
\begin{table}[h]
\caption{Comparison of various algorithms. "Exactly solve " means that  algorithms can reach a critical point of the original GW problem instead of the approximation problem.}
\centering
\begin{tabular}{@{}l|c|c|c@{}}
\toprule
& Single-loop? & Provable? & Exactly solve \eqref{eq:gw_qua}?\,\\
      \midrule
\, eBPG~\citep{solomon2016entropic} &  \textcolor{blue!50!black}{No} & \textcolor{red!50!black}{Yes} &  \textcolor{blue!50!black}{No}\\
\, BPG~\citep{xu2019gromov} & \textcolor{blue!50!black}{No} & \textcolor{red!50!black}{Yes}&\textcolor{red!50!black}{Yes}\\
\, BPG-S~\citep{xu2019scalable} & \textcolor{red!50!black}{Yes} &  \textcolor{blue!50!black}{No}&  \textcolor{blue!50!black}{No}\\
\, FW~\citep{titouan2019optimal}&\textcolor{blue!50!black}{No} & \textcolor{red!50!black}{Yes}&\textcolor{red!50!black}{Yes} \\
\, BAPG (\textbf{our method}) & \textcolor{red!50!black}{Yes}& \textcolor{red!50!black}{Yes}& \textcolor{blue!50!black}{No}\\
      \bottomrule
\end{tabular}
\label{tab:algo}
\end{table}
\begin{table}[h]
\centering
\caption{Statistics of databases for graph alignment.}
\begin{tabular}{c|ccc}
\toprule
Dataset    & \# Samples  & Ave. Nodes  & Ave. Edges \\\midrule
Synthetic   & 300     & 1500        & 56579 \\
Proteins    & 1113    & 39.06       & 72.82 \\
Enzymes     & 600     & 32.63       & 62.14 \\
Reddit      & 500     & 375.9       & 449.3 \\\bottomrule
\end{tabular}

\label{tab:align_data}
\end{table}

\begin{table*}[h]
\centering
\caption{Graph alignment results (Mean $\pm$ Std.) of compared GW-based methods in 5 independent trials over different random seeds in the noise generating process.}
\begin{tabular}{c|cccc}
\toprule
Method   & Synthetic                             & Proteins-Noisy                              & Enzymes-Noisy                               & Reddit-Noisy                                \\
\midrule
FW & 24.50$\pm$0.95                           & 20.24$\pm$0.38                          & 22.80$\pm$1.15                           & 18.34$\pm$0.22                            \\
SpecGW  & 13.27$\pm$0.49                           & 19.31$\pm$0.70                           & 21.14$\pm$0.45                           & 19.66$\pm$0.16                           \\
ScalaGW & 17.93$\pm$0.27                           & 16.05$\pm$0.06                           & 11.46$\pm$0.22                           & 0.70$\pm$0.05                            \\
eBPG     & 34.33$\pm$0.61                           & 45.85$\pm$0.73                           & 60.46$\pm$0.57                           & 3.34$\pm$0.03                            \\
BPG      & 57.56$\pm$0.45                           & 52.46$\pm$0.25                           & 62.32$\pm$0.51                           & 36.68$\pm$0.02                           \\
BPG-S      & 61.48$\pm$0.34                                 & 52.74$\pm$0.13                           & 62.21$\pm$0.55                           & 36.68$\pm$0.02                  \\
KL-BAPG    & \textbf{99.79$\pm$0.21} & \textbf{57.16$\pm$0.23} & \textbf{62.85$\pm$0.48} & \textbf{49.45$\pm$0.02}\\
\bottomrule
\end{tabular}
\end{table*}

%  $\|((\pi+w)/2)^T\mathbf{1}_n -\nu\|+\|((\pi+w)/2) \mathbf{1}_m -\mu\|$
\textcolor{black}{
\subsection{Verification of the Boundedness of Accumulative Asymmetric Error for 2D Toy Example in Sec \ref{sec:toy}}\label{sec:verify}}
\vspace{-4mm}
\begin{figure}[b!]
    \centering
    \includegraphics[width=0.6\textwidth]{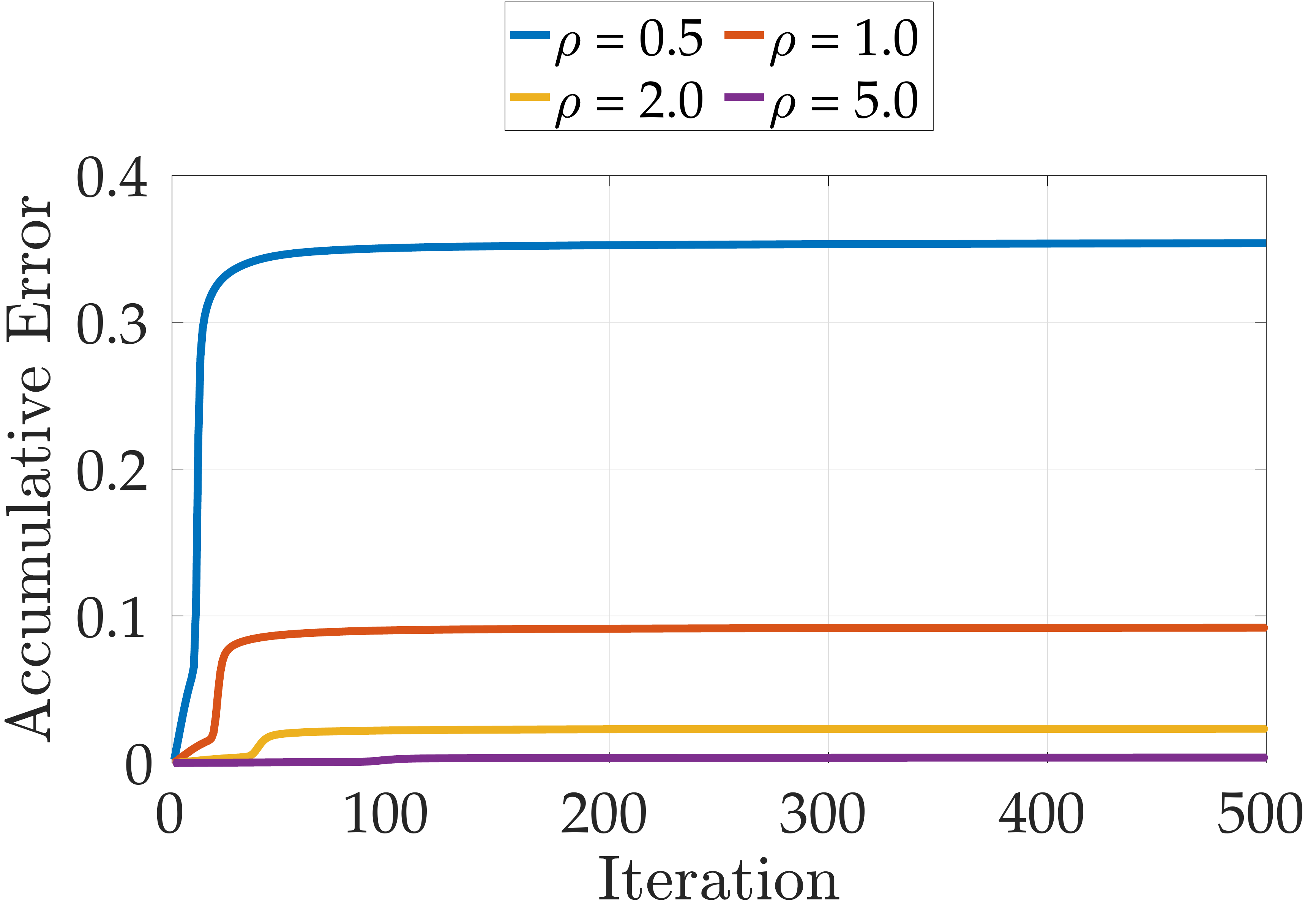}
    \caption{Accumulative Asymmetric Error}
    \label{fig:accumulative_error}
\end{figure}

\subsection{Additional Experiments on Trade-off among Efficiency, Accuracy and Feasibility }
In this subsection, we provide comprehensive experiments to demonstrate the effectiveness of KL-BAPG. Table \ref{tab:mar_compare} shows that FW and SpecGW use the LP as the inner solver, resulting in almost zero infeasibility errors; eBPG and BPG use the sinkhorn solver, resulting in moderate infeasibility errors; BPG-S and the proposed KL-BAPG are infeasible methods, resulting in relatively larger infeasibility errors, but enjoy faster convergence speed. Note that BPG-S is a heuristic method and lacks theoretical support, and may not perform well (or even converge) under naturally noisy observations. 
Overall, we can see that the proposed KL-BAPG is able to sacrifice some feasibility to gain both computational efficiency and matching accuracy.

\begin{table}[h]
\centering
\caption{The average marginal error $\|\pi^T\mathbf{1}_n -\nu\|+\|\pi \mathbf{1}_m -\mu\|$ of compared baselines on the raw version of four graph alignment databases.}
\resizebox{\linewidth}{!}{
\begin{tabular}{c|crc|crc|crc|crc}\toprule
\multirow{2}{*}{Method} & \multicolumn{3}{c|}{Synthetic}                              & \multicolumn{3}{c|}{Proteins}                               & \multicolumn{3}{c|}{Enzymes}                               & \multicolumn{3}{c}{Reddit}                                \\
                        & Acc            & Time          & Error                     & Raw            & Time          & Error                     & Raw            & Time         & Error                     & Raw            & Time         & Error                     \\\midrule
FW                      & 24.50          & 8182  & 4.6e-9           & 29.96          & 54.2  & 4.5e-9           & 32.17          & 10.8  & \textless{}1e-10 & 21.51          & 1121         & \textless{}1e-10 \\
SpecGW                  & 13.27          & 1462  & \textless{}1e-10 & 78.11          & 30.7  & \textless{}1e-10 & 79.07          & 6.7   & \textless{}1e-10 & 50.71          & 1074         & \textless{}1e-10 \\
eBPG                    & 34.33          & 9502  & 3.0e-10          & 67.48          & 208.2 & 9.2e-6           & 78.25          & 499.7 & 1.2e-6           & 3.76           & 1234         & 2.9e-6           \\
BPG                     & 57.56          & 22600 & 2.5e-7           & 71.99          & 130.4 & 2.2e-7           & 79.19          & 73.1  & 1.2e-7           & 39.04          & 1907         & 1.3e-6           \\
BPG-S                   & 61.48          & 18587 & 9.9e-6           & 71.74          & 40.4  & 5.5e-6           & 79.25          & 13.4  & 1.2e-6           & 39.04          & 1431         & 7.8e-6           \\
\bottomrule
\end{tabular}
}\label{tab:mar_compare}
\end{table}

We also conduct sensitivity experiments of KL-BAPG with respect to the step size $\rho$. The results in Table \ref{tab:mar_compare2} corroborate Proposition \ref{prop:bapg_approx} empirically and guide us on how to choose the step size. The larger $\rho$ means a smaller infeasibility error but a slower convergence speed. If the resulting infeasibility error is not too large (i.e., the step size $\rho$ is not too small), the final matching accuracy is robust to the step size and only affects the convergence speed. To further demonstrate that the moderate infeasibility error is not a big deal for the graph alignment task, we execute the post-processing step (i.e., run additional 100 steps with a large step size) to make the infeasibility error smaller. The matching accuracy does not change.

% $\|((\pi+w)/2)^T\mathbf{1}_n -\nu\|+\|((\pi+w)/2) \mathbf{1}_m -\mu\|$
\begin{table}[h]
\caption{The average marginal error of KL-BAPG with different $\rho$ on four graph alignment databases. * represents further running KL-BAPG with a larger $\rho$ (i.e., $\rho=100$) for 100 epochs after convergence.}
\resizebox{\linewidth}{!}{
\begin{tabular}{c|crc|crc|crc|crc}\toprule
\multirow{2}{*}{KL-BAPG} & \multicolumn{3}{c|}{Synthetic} & \multicolumn{3}{c|}{Proteins} & \multicolumn{3}{c|}{Enzymes} & \multicolumn{3}{c}{Reddit} \\
                      & Acc      & Time    & Error    & Raw      & Time   & Error    & Raw     & Time   & Error    & Raw     & Time   & Error   \\\midrule
$\rho$=0.5            & 94.58    & 4610    & 9.4e-4   & 77.21    & 88.7   & 8.8e-6   & 79.78   & 18.2   & 8.6e-7   & 50.19   & 264    & 3.6e-6  \\
$\rho$=0.5*           & 94.57    & 4703    & 1.4e-5   & 77.22    & 98.1   & 8.7e-8   & 79.79   & 22.2   & 4.8e-8   & 50.19   & 280    & 8.7e-8  \\\midrule
$\rho$=0.2            & 99.50    & 2760    & 2.0e-3   & 78.27    & 63.5   & 9.4e-5   & 79.39   & 10.4   & 6.2e-6   & 50.92   & 197    & 1.9e-6  \\
$\rho$=0.2*           & 99.50    & 2854    & 1.5e-5   & 78.27    & 72.9   & 2.1e-7   & 79.39   & 14.3   & 5.6e-8   & 50.92   & 213    & 9.1e-8  \\\midrule
$\rho$=0.1            & 99.79    & 1253    & 3.3e-3   & 78.18    & 59.1   & 1.4e-3   & 79.67   & 6.9    & 2.1e-5   & 50.93   & 115    & 5.1e-6  \\
$\rho$=0.1*           & 99.79    & 1349    & 9.8e-6   & 78.16    & 68.6   & 7.8e-6   & 79.67   & 10.8   & 6.1e-8   & 50.93   & 130    & 9.5e-8  \\\midrule
$\rho$=0.05           & 99.20    & 956     & 5.2e-3   & 77.59    & 36.2   & 1.7e-2   & 79.27   & 5.0    & 2.9e-3   & 50.96   & 102    & 1.5e-5  \\
$\rho$=0.05*          & 99.20    & 1051    & 8.1e-6   & 77.48    & 45.7   & 2.1e-5   & 79.30   & 8.9    & 3.1e-6   & 50.96   & 118    & 9.7e-8  \\\midrule
$\rho$=0.01           & 93.63    & 616     & 4.1e-2   & 60.08    & 14.6   & 1.3e-1   & 66.86   & 4.0    & 8.4e-2   & -       & -      & -       \\
$\rho$=0.01*          & 95.02    & 712     & 4.9e-6   & 60.92    & 24.0   & 9.3e-3   & 67.20   & 7.8    & 4.4e-5   & -       & -      & -  \\\bottomrule    
\end{tabular}
}\label{tab:mar_compare2}
\end{table}

\newpage
\subsection{Multi-Omics Single-Cell Integration }
To validate the application capability of KL-BAPG beyond graph alignment and partition, we further evaluate the proposed BAPG on an additional task: multi-omics single-cell integration. This task aims to integrate multiple molecular features at different modalities in a cell, e.g., gene expressions and chromatin accessibility \citep{teichmann2020method,wen2022graph,cao2020unsupervised,cao2022manifold}. It offers opportunities for gaining holistic views of cells and was selected as ``Method of the Year 2019'' by Nature \citep{teichmann2020method}.

\paragraph{Task Definition}Suppose we have two single cell multi-omics datasets, $X = [x_1;\cdots;x_{n_x}] \in \mathbb R^{n_x \times d_x}$ and $ Y = [y_1;\cdots;y_{n_y}] \in \mathbb R^{n_x \times d_x}$, across two modalities, where $n_x/n_y$ and $d_x/d_y$ are the number of cell samples and feature dimensions for $X$/$Y$, respectively. Given $X$ and $Y$, the task of multi-omics single-cell integration aims at calculating the joint representation of two modalities. In the joint representation, the same cell types in $X$ and $Y$ should be clustered together. Thus, aligning the same types of cells across $X$ and $Y$ in advance can improve the quality of the joint representation.

\begin{table}[h!]
\caption{Statistics of two real-world single-cell multi-omics datasets.}\label{tab:omics_data}
\centering
\begin{tabular}{l|ll|lll}\toprule

                               & \multicolumn{2}{c|}{scGEM} & \multicolumn{3}{c}{scMNT}               \\\midrule
\multirow{2}{*}{Modality Name} & Gene       & DNA          & Gene      & DNA         & Chromatin     \\
                               & Epression  & Methylation  & Epression & Methylation & Accessibility \\\midrule
Samples $n$                    & 177        & 177          & 612       & 709         & 1940          \\
Features $d$                   & 34         & 27           & 300       & 300         & 300          \\\bottomrule
\end{tabular}
\end{table}

\paragraph{Dataset and Evaluation} Following \citet{cao2020unsupervised,cao2022manifold}, we conduct experiments on two real-world single-cell multi-omics datasets: (1) the scGEM dataset for the single-cell analysis of genotype, expression and methylation data and (2) the scNMT dataset for the single-cell analysis of nucleosome, methylome and transcriptome data. The dataset statistics is listed in Table \ref{tab:omics_data}. We use the label transfer accuracy (i.e., the percentage of aligned cell pairs with the same cell type across two modalities) to measure the ability to transfer labels of the shared cells from one modality to another.

\paragraph{Baselines} We compare the proposed KL-BAPG with other GW methods in our paper by directly setting the distance matrix $D_X$ and $D_Y$ as the feature similarity matrix. Besides, we add two baselines specified for this task, including UnionCom \citep{cao2020unsupervised} and Pamona \citep{cao2022manifold}. Pamona also relies on calculating the GW distance method and uses eBPG as the GW solver. We replace eBPG in Panona with KL-BAPG and obtain a new method named Pamona-BAPG.

\begin{table}[h!]
\caption{Comparison of the label transfer accuracy (\%) and wall-clock time (seconds) on the scGEM and scNMT datasets for single-cell multi-omics integration. D., G., and Ch. represent the modalities of DNA Methylation, Gene Expression, and Chromatin Accessibility, respectively.}\label{tab:omics_result}

\begin{tabular}{c|lll|lllll}
\toprule
\multicolumn{1}{l}{} & \multicolumn{3}{|c|}{scGEM}                                                              & \multicolumn{5}{c}{scMNT}                                                                                                                              \\\midrule
Method               & \multicolumn{1}{c}{D.→G.} & \multicolumn{1}{c}{G.→D.} & \multicolumn{1}{c|}{Time} & \multicolumn{1}{c}{D.→G.} & \multicolumn{1}{c}{G.→D.} & \multicolumn{1}{c}{Ch.→G.} & \multicolumn{1}{c}{G→Ch.} & \multicolumn{1}{c}{Time} \\\midrule
FW                   & 40.7                         & 40.7                         & 0.7                      & 74.2                         & \textbf{74.3}                & 44.1                          & 45.4                          & 19.9                     \\
eBPG                 & 7.9                          & 10.7                         & 7.4                      & 63.4                         & 68.0                         & 46.6                          & 44.8                          & 64.2                     \\
specBPG              & 21.5                         & 21.5                         & 0.4                      & 61.4                         & 57.2                         & 36.4                          & 39.7                          & 2.2                      \\
BPG-S                & 9.0                          & 14.1                         & 1.2                      & 69.0                         & 73.2                         & 60.1                          & 60.7                          & 2.3                      \\
BPG                  & 11.9                         & 11.3                         & 11.5                     & \textbf{75.1}                & 73.2                         & 60.8                          & 60.0                          & 22.3                     \\
KL-BAPG                 & \textbf{52.0}                & \textbf{50.8}                & \textbf{0.3}             & 74.0                         & \textbf{74.3}                & \textbf{68.0}                 & \textbf{67.0}                 & \textbf{0.8}             \\\midrule
UnionCom             & 48.0                         & 50.8                         & 6.4                      & 85.6                         & 76.3                         & 80.1                          & 78.8                          & 147.4                    \\
Pamona               & 61.6                         & \textbf{58.2}                & 4.4                      & 71.6                         & 56.6                         & 71.9                          & 66.8                          & 26.1                     \\
Pamona-BAPG          & \textbf{67.2}                & \textbf{58.2}                & 0.7                      & \textbf{85.7}                & \textbf{91.3}                & \textbf{80.7}                 & \textbf{82.9}                 & 6.5      \\\bottomrule               
\end{tabular}
\end{table}

\paragraph{Results of All Methods} Table \ref{tab:omics_result} summarizes the results of all compared methods on the scGWM and scNMT datasets. Among all GW methods, KL-BAPG achieves the best performance on 4 out of 6 scores and has the shortest running time. UnionCom and Pamona are better than most pure GW methods, as they involve additional pre-processing and post-processing techniques specific to this task. Moreover, Pamona-BAPG outperforms Pamona on 5 out of 6 scores and reduces the running time by 16\%-25\%. This indicates that the calculation of GW distance is a bottleneck for this task and a better GW solver can greatly facilitate related downstream applications. As this is just a preliminary study to show the potential of applying KL-BAPG to other tasks, we leave other potential application-driven tasks for future work.

\subsection{Source codes of all baselines used in this paper and their licenses}
\begin{itemize}
    \item BPG-S (GPL-3.0 license)~\citep{xu2019gromov}: \url{https://github.com/HongtengXu/gwl}
    \item ScalaGW (No license)~\citep{xu2019scalable}: \url{https://github.com/HongtengXu/s-gwl}
    \item SpecGW (MIT license)~\citep{chowdhury2021generalized}:\\ \url{https://github.com/trneedham/Spectral-Gromov-Wasserstein}
    \item eBPG (MIT license)~\citep{flamary2021pot}:  \url{https://github.com/PythonOT/POT} 
    \item FW (MIT license)~\citep{flamary2021pot}: \url{https://github.com/PythonOT/POT} 
    \item IPFP, RRWM, SpecMethod (Mulan PSL v2 lincense): \url{https://github.com/Thinklab-SJTU/pygmtools}
    % \item (Matlab) eBPG (No license)~\cite{solomon2016entropic}:\\ \url{https://github.com/justso1/EntropicMetricAlignment}
\end{itemize}

\end{document}